\documentclass[journal, twoside, web]{ieeecolor}
\usepackage{generic}
\usepackage[english]{babel}

\newif\ifarxiv
\arxivfalse   

\usepackage{xcolor}
\let\labelindent\relax

\usepackage{amsmath,amssymb,amsthm}
\usepackage{enumitem}
\usepackage{mathrsfs}
\usepackage{graphicx}
\usepackage{caption,subcaption}    
\usepackage{setspace}
\usepackage{tipa}
\usepackage{booktabs}
\usepackage{newtxtext}
\usepackage{float}
\usepackage{epsfig}
\usepackage{epstopdf}
\usepackage{array}
\usepackage{algorithm}
\usepackage{algorithmicx}
\usepackage{algpseudocode}
\usepackage{textcomp}
\usepackage{threeparttable}

\usepackage[nocompress]{cite}
\usepackage[colorlinks,linkcolor=blue]{hyperref}
\usepackage[nameinlink,capitalize]{cleveref}
\usepackage{orcidlink}

\theoremstyle{plain}
\newtheorem{proposition}{Proposition}
\newtheorem{theorem}{Theorem}
\newtheorem{lemma}{Lemma}
\newtheorem{corollary}{Corollary}

\theoremstyle{definition}
\newtheorem{definition}{Definition}
\newtheorem{problem}{Problem}
\newtheorem{assumption}{Assumption}
\newtheorem{example}{Example}
\newtheorem{remark}{Remark}

\begin{document}

\title{Submodular Policy Learning for Distributed Task Allocation in Open Multi-Agent Systems}

\author{Jing~Liu, Luca~Ballotta,~\IEEEmembership{Member,~IEEE},
Yangyang~Yang,
Fangfei~Li,~\IEEEmembership{Member,~IEEE},
Yang~Tang,~\IEEEmembership{Fellow,~IEEE},
and~Ruggero~Carli,~\IEEEmembership{Member,~IEEE}
\thanks{This work is supported by the Program of China Scholarship Council under Grant 202506740033, the National Key R\&D Program of China under Grant 2025YFA1016504, the National Natural Science Foundation of China under Grants 62233005, 62573198, U2441245, U25B6002, 62403141, the National Key Laboratory of Space Target Awareness under Grant STA2025ZCB0208, in part by the Shanghai Institute for Mathematics and Interdisciplinary Sciences (SIMIS) under Grant SIMIS-ID-2025-SP. \textit{(Corresponding authors: Fangfei Li and Yang Tang.)}}
\thanks{Jing Liu and Yangyang Yang are with the School of Mathematics, East China University of Science and Technology, Shanghai 200237, China (e-mail: y20220091@mail.ecust.edu.cn; y20250091@mail.ecust.edu.cn).}
\thanks{Luca Ballotta and Ruggero Carli are with the Department of Information Engineering, University of Padova, 35131 Padova, Italy (e-mail: luca.ballotta@unipd.it; ruggero.carli@unipd.it).}
\thanks{Fangfei Li is with the School of Mathematics and the Key Laboratory of Smart Manufacturing in Energy Chemical Process, Ministry of Education, East China University of Science and Technology, Shanghai 200237, China (e-mail: li\_fangfei@163.com, lifangfei@ecust.edu.cn).}
\thanks{Yang Tang is with the Key Laboratory of Smart Manufacturing in Energy Chemical Process, Ministry of Education, East China University of Science and Technology, Shanghai 200237, China (e-mail: yangtang@ecust.edu.cn).}}

\maketitle

\begin{abstract}
This paper studies policy learning for distributed task allocation in open multi-agent systems, where agents may join and leave in a time-varying fashion, with submodular stage team  utilities. At each time, the active agents select actions from local categorical policies such that the feasible joint agent-action pairs form a partition matroid. Standard continuous relaxations of submodular set functions are based on independent Bernoulli sampling, making them inconsistent with agents' policies.
To solve this mismatch, we propose the \emph{partition multilinear extension} (PME), a policy-based relaxation whose  continuous support matches feasible actions under categorical policies.
We prove that the marginal gains of the stage utility provide an unbiased estimator of the gradient of the PME and that maximizing the PME over action distributions is equivalent to maximizing the stage utilities over agent actions, which are critical to devise principled policy gradient.
Building on this, we design \emph{SubMAPL}, a centralized-training decentralized-execution KL-mirror policy-learning method that uses local marginal gains as stochastic PME gradients during training. KL-mirror updates preserve categorical feasibility without Euclidean projection.
In the case where agents run tabular-softmax policies, we introduce open policy migration and an open-system KL tracking variation to handle agent arrivals and departures. Using dynamic regret analysis, we establish a lower bound on the cumulative utility which accounts for the openness of the environment and for the gap between optimal stage-wise and global utilities. Simulations on multi-agent coverage demonstrate that SubMAPL outperforms policy-gradient and online-learning baselines.

\begin{IEEEkeywords}
Open multi-agent systems, submodular optimization, multi-agent reinforcement learning, distributed task allocation, policy learning.
\end{IEEEkeywords}
\end{abstract}


\section{Introduction}
\label{sec:introduction}

\IEEEPARstart{M}{ulti}-agent task allocation in open multi-agent systems (OMAS) arises in dynamic target tracking, active sensing, and coverage, where agents must coordinate local decisions under time-varying participation, limited communication, and non-additive team utilities. In long-duration deployments, agents may enter, leave, fail, or recover due to hardware malfunctions, battery recharging, or mission-level reconfiguration ~\cite{zhang2018fully,xu2023online,deplano2026optimization}. 

A central feature of such tasks is that utility functions are rarely additive across agents. For instance, overlapping field-of-views capture redundant information in coverage tasks. This so-called diminishing-returns structure is naturally modeled by monotone submodular set functions~\cite{fujishige2005submodular,Welikala2022ANP,xu2025communication}. Submodularity has been extensively used in combinatorial optimization because it enables approximation guarantees for greedy selection under matroid constraints~\cite{nemhauser1978analysis,fisher1978analysis, xu2023online,liu2026distributed} and the use of continuous relaxations such as the multilinear extension (MLE) with rounding~\cite{calinescu2011maximizing, zhang2025near}. While these approaches are mostly tailored to static centralized optimization and task allocation and require dense communication graphs for consensus, the benefits of submodularity in learning data-driven multi-agent policies are underexplored.

Multi-agent reinforcement learning (MARL) is widespread to learn decentralized agent policies under partial observability and dynamical  environment~\cite{zhang2018fully,lowe2017multi,yu2022surprising}. Although the interplay between MARL and submodularity has been studied, existing results are limited to centralized settings~\cite{prajapat2024submodular,
desanti2024global,chen2026multi}.
On the other hand, OMASs have garnered attention in the online learning literature, where submodularity-based methods enable suboptimality quantification of online optimization algorithms rather than learnable policies~\cite{xu2023online,zhang2025near,zhang2025effective}. 

Incorporating submodular utilities into the learning of distributed policies is challenging. 
Static submodular problems are NP-hard in general and admit approximation guarantees rather than exact polynomial-time solutions~\cite{nemhauser1978analysis,fisher1978analysis,calinescu2011maximizing}.
The data-driven and dynamic nature of multi-agent policy learning introduces additional challenges compared to static optimization.
First, the gradient feedback signal used in training must be consistent with decentralized categorical policies. However,
classical multilinear relaxations rely on independent Bernoulli sampling and rounding, and therefore do not represent factorized categorical policies executed by agents~\cite{calinescu2011maximizing,zhang2025near}.
Moreover, the combinatorial difficulty is coupled with sequential state evolution, decentralized information, and exponentially growing joint action space which demands careful credit assignment~\cite{zhang2018fully,lowe2017multi,yu2022surprising}.
Existing submodular RL formulations target centralized policies rather than fully distributed decision-making~\cite{prajapat2024submodular,desanti2024global,chen2026multi}. 

\subsubsection*{Contribution}
Building on the PME and its gradient characterization introduced in our preliminary work~\cite{liu2026submodular},
we propose three main contributions to bridge the above mentioned gaps.
\begin{enumerate}[label=(\arabic*), leftmargin=*]

\item We formulate distributed online submodular coordination as Markov decision process (MDP) in an OMAS and develop \emph{SubMAPL}, a KL-mirror policy-learning method for decentralized submodular task allocation. 
Unlike sampled-action difference-reward policy gradients and counterfactual credit-assignment methods~\cite{foerster2018coma,castellini2025difference,liu2026submodular}, SubMAPL uses a stochastic local gradient that evaluates every feasible local action of each active agent under the same sampled actions of the other agents. We show that this vector is an unbiased estimator of the PME coordinate gradient, thereby providing a principled first-order interpretation of marginal-contribution feedback. 

\item We introduce an open policy-migration mechanism to handle agent arrivals and departures, and define an open-system KL tracking variation that measures changes in both the stagewise optimal comparator and the time-varying policy domain. Our preliminary work~\cite{liu2026submodular} establishes guarantees for idealized Euclidean projected dynamics that directly update the PME marginals. However, the standard softmax policy-gradient implementation updates the logits, and the resulting marginal change depends on the softmax Jacobian and agrees with PME ascent only to first order.
SubMAPL admits an exact equivalence between its KL-mirror policy update and an additive tabular-logit update. This removes the first-order mismatch between the analyzed policy-space dynamics and the implemented parameter update, allowing us to establish open-system guarantees directly for the policy sequence generated under agent arrivals and departures.

\item  By leveraging dynamic regret analysis in conjunction with DR-submodularity and monotonicity of the PME, we establish a lower bound on the cumulative utility within a $1/2$-approximation factor from the optimum.
We further establish a $1/3$-factor bound in the case where the stage utilities are themselves marginal gains of a utility function capturing performance through the whole horizon, including scenarios such as coverage of static maps. These results complement previous analysis on stability and adaptability of open systems~\cite{deplano2026optimization, anil2025mohito}.
Our analysis significantly improves our preliminary work~\cite{liu2026submodular}, which derives a stagewise bound on the PME and a regret bound on the cumulative PME without formal guarantees on the original finite-horizon cumulative utility.
\end{enumerate}

\subsubsection*{Organization}
~\cref{sec:problem} formulates distributed submodular
coordination as an MDP in OMAS.
~\cref{sec:pme} introduces the PME and
its properties. ~\cref{sec:algorithm} presents SubMAPL and ~\cref{sec:theory} derives its suboptimality bound.
~\cref{sec:experiments} reports simulations, and ~\cref{sec:conclusion} concludes the paper.

\subsubsection*{Notation}
We use $\mathbb N$ to denote the set of positive integers and define $[T]=\{1,\ldots,T\}$ for any $T\in\mathbb N$. For a finite set $S$, $|S|$ denotes its cardinality and $2^S$ denotes its power set. For sets $A$ and $B$, $A\setminus B$, $A\cup B$, and $A\cap B$ denote set difference, union, and intersection, respectively. For a scalar $z$, $[z]_+=\max\{z,0\}$. For vectors $x$ and $y$, $x\le y$ denotes componentwise inequality and $\langle x,y\rangle$ denotes the Euclidean inner product. We use $\nabla f(x)$ and $\partial f/\partial x_i$ for gradients and partial derivatives,
respectively. Expectations and probabilities are denoted by
$\mathbb E[\cdot]$ and $\Pr(\cdot)$. The Kullback--Leibler divergence is denoted by $D_{\rm KL}(\cdot\Vert\cdot)$. Standard asymptotic notation is denoted by $O(\cdot)$ and $o(\cdot)$. \cref{tab:notation} summarizes the main paper-specific notation.

\begin{table}[t]
\centering
\caption{Main notation in the paper.}
\label{tab:notation}
\small
\begin{tabular}{@{}p{0.25\columnwidth}p{0.7\columnwidth}@{}}
\toprule
Symbol & Meaning \\
\midrule
$\mathcal M$ & Markov decision process in OMAS. \\
$\mathcal N_t$ & Set of active agents at time $t$. \\
$\mathcal R_t$ & Set of agents active at both times $t-1$ and $t$. \\
$\mathcal N_t^{\rm in}$ & Set of agents arriving at time $t$. \\
$\mathcal N_t^{\rm out}$ & Set of agents departing at time $t$. \\
$\mathcal S$ & Global state space. \\
$s_t$ & Global state at time $t$. \\
$\mathcal A_{i}$ & Local action space of agent $i$. \\
$o_{i,t}$ & Local observation of agent $i$ at time $t$. \\
$a_{i,t}$ & Local action selected by agent $i$ at time $t$. \\
$\mathbf{a}_t$ & Joint action tuple $(a_{i,t})_{i\in\mathcal N_t}$. \\
$\Omega_t$ & Active agent-action space at time $t$. \\
$\Omega_{i,t}$ & Local agent-action space of agent $i$ at time $t$. \\
$\mathcal I_t$ & Partition-matroid feasible family on $\Omega_t$. \\
$A_t$ & Set of agent-action pairs executed at time $t$. \\
$F_t(A_t;s_t)$ & Stage utility at time $t$. \\
$\pi^i$ & Local categorical policy of agent $i$. \\
$\pi$ & Sequence of factorized policies with active agents.\\
$p^\pi$ & Distribution of state-action trajectory under $\pi$. \\
$J(\pi)$ & Expected cumulative utility under  $\pi$. \\
$x_t^\pi(s_t)$ & Policy-induced action marginal vector at state $s_t$. \\
$\mathcal P(\mathcal I_t)$ & Partition-matroid polytope at time $t$. \\
$\mathcal F_t$ & Categorical-policy face of $\mathcal P(\mathcal I_t)$. \\
$\tilde f_t(\cdot;s_t)$ & Partition multilinear extension of $F_t(\cdot;s_t)$. \\
$\mathcal D_t(\cdot\mid x)$ & Partition sampling distribution induced by $x$. \\
$\mathcal D_{t}^{-i}(\cdot\mid x)$ & Sampling distribution over all agents except $i$. \\
$\widehat g_t$ & Stochastic estimator of $\nabla\widetilde f_t(x_t;s_t)$. \\
$\mathcal H_t$ & Pre-sampling history at time $t$. \\
$x_t^\star$ & PME maximizer at time $t$. \\
$\mathcal V_T^{\rm open}$ & Open-system KL tracking variation. \\
$\mathrm{Reg}_T^{1/2}$ & Dynamic $1/2$-approximation regret.\\
\bottomrule
\end{tabular}
\end{table}

\section{System Setup and Problem Formulation}
\label{sec:problem}

\begin{figure}
    \centering
    \includegraphics[width=1\linewidth]{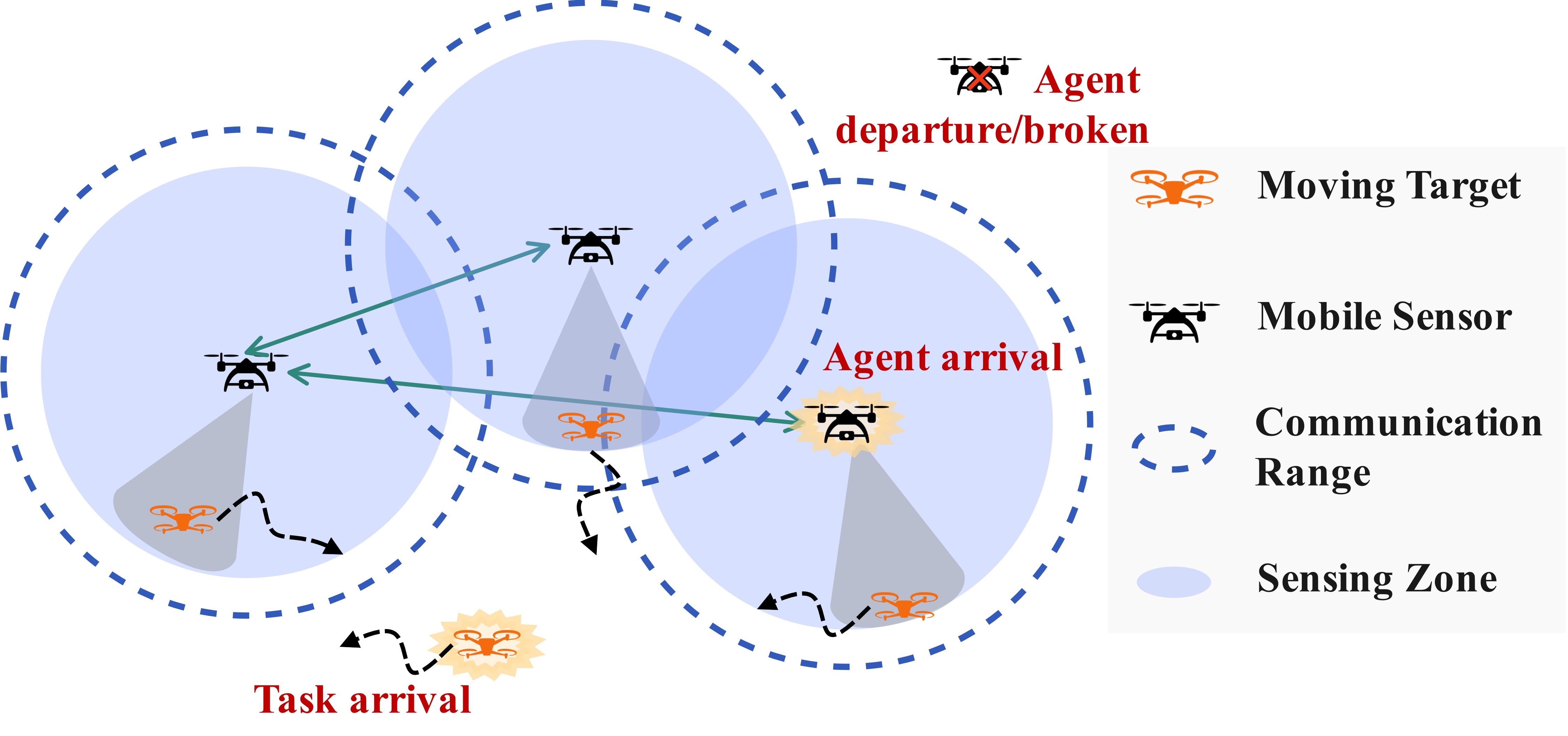}
    \captionsetup{font={small}}
      \caption{Open multi-agent task allocation with time-varying agent participation, dynamic tasks, and limited communication and sensing.}
    \label{fig:system}
\end{figure}

We study decentralized task allocation in open MDPs~\cite{prajapat2024submodular,deplano2026optimization}. At each time step, the active agents locally observe the state of the environment and execute actions aiming to maximize a non-additive submodular global utility. Active agents and utilities may change over time, reflecting agent arrivals and departures, dynamic tasks, or an evolving environment \cite{LingfeiSu2026OptimalTO}, as shown in \autoref{fig:system}.
Our goal is to efficiently learn effective decentralized policies by leveraging the submodularity of team utilities.

\subsection{Markov Decision Process in OMAS}

Let $t\in[T]=\{1,\ldots,T\}$ denote the time step and
$\mathcal N_t$ the set of active agents at time $t$.
We model the task-allocation process as a finite-horizon partially observed open multi-agent Markov decision process $\mathcal M = \langle \mathcal S,\mathcal A,\mathcal O,\mathcal P,F,\mu_1 \rangle$, where $\mathcal S$ is the global state space, $\mathcal A$ denotes the collection of feasible local action sets, $\mathcal O=\{O_t\}_{t\in[T]}$ is the sequence of local observation maps, $\mathcal P=\{P_t\}_{t\in[T]}$ is the sequence of controlled transition kernels, $F=\{F_t\}_{t\in[T]}$ is the sequence of stage-utility set functions, and $\mu_1$ is the initial-state distribution on $\mathcal S$.
For each agent $i$ that may be active during the horizon, let $\mathcal O_i$ denote its finite local observation space. Here, $O_t=(O_{i,t})_{i\in\mathcal N_t}$, where $O_{i,t}:\mathcal S\to\mathcal O_i$ and
$o_{i,t}=O_{i,t}(s_t)$. The initial state is sampled as $s_1\sim\mu_1$~\cite{puterman2014markov,deplano2026optimization,chen2026multi}.

Each active agent $i\in\mathcal{N}_t$ receives a local observation $o_{i,t}$ and selects an action $a_{i,t}$ from a finite nonempty feasible action set $\mathcal A_{i}$.
The global state $s_t\in\mathcal S$ characterizes the environment, including the tasks to be completed. Given the joint action $\mathbf{a}_t=(a_{i,t})_{i\in\mathcal N_t}$, the transition kernel $P_t$ governs the state evolution as $s_{t+1}\sim P_t(\cdot\mid s_t,\mathbf{a}_t)$.

The ground set of agent-action pairs available at time $t$ is $\Omega_t = \bigl\{(i,a): i\in\mathcal{N}_t,\ a\in\mathcal A_{i}\bigr\}$.
For each active agent, define the local actions $\Omega_{i,t} =
\bigl\{(i,a): a\in\mathcal A_{i}\bigr\},\ i\in\mathcal{N}_t$.
The feasible subsets of agent-action pairs are described by the partition matroid $(\Omega_t,\mathcal I_t)$ with
$\mathcal I_t =  \bigl\{  A\subseteq\Omega_t:
 |A\cap\Omega_{i,t}|\le 1,\ \forall i\in\mathcal N_t \bigr\}$. The executed joint action induced by the local choices is $A_t= \{(i,a_{i,t}):i\in\mathcal N_t\}\in\mathcal I_t$ \cite{calinescu2011maximizing}.

To describe changes in the open-system domain, let $\mathcal{N}_0=\emptyset$ and define
 $\mathcal R_t
  =  \mathcal N_t\cap\mathcal N_{t-1}, 
  \mathcal N_t^{\rm in}
  =  \mathcal N_t\setminus\mathcal N_{t-1},
  \mathcal N_t^{\rm out}
  =  \mathcal N_{t-1}\setminus\mathcal N_t$. 
Here $\mathcal{R}_t$, $\mathcal{N}_t^{\rm in}$, and $\mathcal{N}_t^{\rm out}$ denote the remaining, arriving, and departing agents at time $t$, respectively. Moreover, we assume the agent arrivals and departures are independent of the agents' local policies and the state transitions~\cite{deplano2026optimization}. 

\subsection{Decentralized Categorical Policies}

Each active agent $i\in\mathcal N_t$ processes a local observation $o_{i,t}=O_{i,t}(s_t)$, where $O_{i,t}:\mathcal S\to\mathcal O_i$ is a deterministic observation map modeling local sensing and communication with neighbors, and selects an action $a_{i,t}\in\mathcal A_i$. A policy $\pi^i$ assigns a categorical distribution over $\mathcal A_{i}$ given $o\in\mathcal O_i$ such that, at time $t$, active agent $i\in\mathcal N_t$ computes its action as a realization $a_{i,t}\sim\pi^i(\cdot\mid o_{i,t})$.
Let $\mathbf o_t=(o_{i,t})_{i\in\mathcal N_t}$ denote the joint
observation vector.
We consider factorized categorical policies \cite{oliehoek2016concise,zhang2018fully}
\begin{equation}
\label{eq:factorized_policy}
\pi_t(\mathbf a_t\mid\mathbf o_t)
=
\prod_{i\in\mathcal N_t}
\pi^i(a_{i,t}\mid o_{i,t}).
\end{equation}
The factorized policy $\pi_t$ is time dependent as a consequence of time-varying active agent sets $\mathcal{N}_t$.
The factorization in~\eqref{eq:factorized_policy} governs decentralized execution during deployment;
since the distributions
$\{\pi^i(\cdot\mid o_{i,t})\}_{i\in\mathcal N_t}$ are conditioned on local observations $o_{i,t}$, the agents sample their actions independently.

\subsection{Submodular Team Utility and Problem Statement}

At each time step $t$, the team performance is measured by a set function $F_t(\cdot\,;s_t):2^{\Omega_t}\to\mathbb R_{\ge0}$ defined over subsets of the current agent-action ground set $\Omega_t$. 

\begin{definition}[Normalized monotone submodular set function~\cite{krause2014SubmodularFM}]
\label{def:submodular_set_function}
Let $\Omega$ be a finite ground set. For any $A\subseteq\Omega$ and $e\in\Omega\setminus A$, define the marginal gain of adding $e$ to $A$ as $F(e\mid A) = F(A\cup\{e\})-F(A)$. A set function $F:2^\Omega\to\mathbb R_{\ge 0}$ is normalized, monotone, and submodular if it respectively satisfies the following properties:
\begin{enumerate}
    \item $F(\emptyset)=0$.
    \item for any sets $A\subseteq A'\subseteq\Omega$, $F(A)\le F(A')$.
    \item for any sets $A\subseteq A'\subseteq\Omega$ and element $e\in\Omega\setminus A'$, $F(e\mid A)\ge F(e\mid A')$.
\end{enumerate}
\end{definition}

\begin{assumption}[Utility and Feasibility Conditions]
\label{ass:standing_conditions}
For every time step $t\in[T]$, the following holds.
\begin{enumerate}
    \item[(i)] For every state $s_t$, the team utility
    $F_t(\cdot\,;s_t):2^{\Omega_t}\to\mathbb R_{\ge0}$
    is normalized, monotone, and submodular.
    
    \item[(ii)] Each active agent has a nonempty local feasible action set, i.e.,
    $\mathcal A_{i}\neq\emptyset$ for all $i\in\mathcal N_t$.
\end{enumerate}
\end{assumption}

\cref{ass:standing_conditions} means that adding an agent-action pair can increase the team utility but its marginal gain decreases when more agent-action pairs have already been selected. This diminishing-returns structure captures redundancy among agent contributions~\cite{xu2023online}.

\begin{problem}[Open-System Submodular Policy Learning]
\label{prob:main}
Given the open multi-agent MDP, find a decentralized categorical policy $\pi = \{\pi_t\}_{t=1}^T$, with $\pi_t=\prod_{i\in\mathcal N_t}\pi^i$, that maximizes the expected cumulative submodular utility

\begin{equation}
\label{eq:objective}
  \begin{aligned}
   \max_{\pi=\{\pi_t\}_{t=1}^T}\quad & J(\pi)
      =
      \mathbb E_{\tau\sim p^\pi}
      \left[
      \sum_{t=1}^{T} F_t(A_t;s_t)
      \right] \\
    \mathrm{s.t.}\quad
    & a_{i,t}\sim \pi^i(\cdot\mid o_{i,t}), 
      \ \forall i\in\mathcal N_t, \ \forall t\in[T],\\
    & A_t\in\mathcal I_t,
      \ \forall t\in[T].
  \end{aligned}
\end{equation}
Here, $\tau=(s_1,\mathbf a_1,s_2,\ldots,
s_T,\mathbf a_T,s_{T+1})$ and $p^\pi$ denotes the trajectory distribution induced by $\pi$ and the transition kernels $\{P_t\}_{t=1}^{T}$.
\end{problem}

Even for a single time step, \cref{prob:main} reduces to monotone submodular maximization under a partition matroid constraint, which is NP-hard in general~\cite{nemhauser1978analysis,calinescu2011maximizing}. 
Over a finite horizon, this combinatorial difficulty is exacerbated by action-dependent state evolution and time-varying agent participation. 
Since directly optimizing the cumulative utility (i.e., the return) is intractable~\cite{prajapat2024submodular}, we adopt a stagewise approach and leverage the submodularity of stage utilities to learn the decentralized policies. 
The next section develops a continuous relaxation of stage utilities that formally supports our stagewise policy learning framework.

\section{Partition Multilinear Extension}
\label{sec:pme}

A standard approach to combinatorial submodular  maximization is to replace the discrete set problem with a continuous relaxation that is amenable to continuous optimization methods, particularly under matroid constraints~\cite{vondrak2008optimal,calinescu2011maximizing}. The standard multilinear extension is based on independent Bernoulli sampling over ground-set elements~\cite{chekuri2014submodular}. Under the partition constraint induced by $\mathcal I_t$, this sampling rule may select multiple elements from the same agent action subset $\Omega_{i,t}$, creating a formal mismatch with the policy in \eqref{eq:factorized_policy} which selects one feasible action per agent. This motivates the continuous relaxation defined in the following to support training of factorized categorical policies in a principled fashion.

Building on our preliminary work~\cite{liu2026submodular}, we use the
partition multilinear extension (PME), a continuous relaxation of utilities $F_t$ whose sampling distribution matches the factorized categorical execution under the partition matroid.
At each time step $t$, policy~\eqref{eq:factorized_policy} at observations $\{o_{i,t}\}_{i\in\mathcal{N}_t}$ induces a marginal vector $x_t^\pi(s_t)\in[0,1]^{|\Omega_t|}$ over agent-action pairs, with coordinates
\begin{equation}
\label{eq:policy_induced_marginal}
  x_{(i,a),t}^{\pi}(s_t)  = \pi^i(a\mid o_{i,t}), \
  (i,a)\in\Omega_t.
\end{equation}
For readability, we denote a generic marginal vector by $x$. Each coordinate $x_{(i,a),t}^\pi(s_t)$ is the probability that agent $i\in\mathcal{N}_t$ executes action $a$ at the current state $s_t$ under policy $\pi^i$.

\subsection{Definition and Policy Equivalence}
\label{sec:pme_definition}

The partition matroid polytope associated with $(\Omega_t,\mathcal I_t)$ is \cite{calinescu2011maximizing}
\begin{equation}
\label{eq:partition_matroid_polytope}
  \mathcal P(\mathcal I_t)
  = 
  \left\{
  x\in[0,1]^{|\Omega_t|}:
  \sum_{a\in\mathcal A_{i}} x_{(i,a)}\le 1,\ 
  \forall i\in\mathcal N_t
  \right\}.
\end{equation}
For $x\in\mathcal P(\mathcal I_t)$, let
$\mathcal D_t(\cdot\mid x)$ denote the partition sampling
distribution over $\mathcal I_t$. Under this distribution, each agent $i\in\mathcal N_t$ selects element $(i,a)$ with probability $x_{(i,a)}$, and selects no element with probability $1-\sum_{a\in\mathcal A_{i}}x_{(i,a)}$. The choices are independent
across agents.

\begin{definition}[Partition Multilinear Extension \cite{liu2026submodular}]
\label{def:pme}
Given the current state $s_t$ and open-system domain, the partition
multilinear extension of $F_t(\cdot;s_t)$ is the function
$\tilde f_t(\cdot;s_t):\mathcal P(\mathcal I_t)\to\mathbb R_{\ge0}$
defined by
\begin{align}
\tilde f_t(x;s_t)
&=
\mathbb E_{A\sim\mathcal D_t(\cdot\mid x)}
\left[
F_t(A;s_t)
\right]
\label{eq:pme_def_expectation}
\\
&=
\sum_{A\in\mathcal I_t}
F_t(A;s_t)
\prod_{i\in\mathcal N_t}
p_i(A;x),
\label{eq:pme_def_sum}
\end{align}
where
\begin{equation}
\label{eq:agent_prob}
  p_{i}(A;x)
  =
  \begin{cases}
  x_{(i,a)},
  & \text{if } A\cap\Omega_{i,t}=\{(i,a)\}, \\[3pt]
  1-\displaystyle\sum_{a'\in\mathcal A_{i}}x_{(i,a')},
  & \text{if } A\cap\Omega_{i,t}=\emptyset.
  \end{cases}
\end{equation}
\end{definition}

The categorical-policy face of $\mathcal P(\mathcal I_t)$ is
\begin{equation}
  \mathcal F_t
  =
  \left\{
  x\in\mathcal P(\mathcal I_t):
  \sum_{a\in\mathcal A_{i}}x_{(i,a)}=1,\ 
  \forall i\in\mathcal N_t
  \right\}.
\end{equation}
For each $i\in\mathcal N_t$, define the local categorical simplex $\mathcal F_{i} =  \left\{
  x_i\in[0,1]^{|\mathcal A_{i}|}:
  \sum_{a\in\mathcal A_{i}}x_{(i,a)}=1
  \right\}$.
Then $\mathcal F_t=\prod_{i\in\mathcal N_t}\mathcal F_{i}$. For any $x\in\mathcal F_t$, the probability of selecting no action is zero in \eqref{eq:agent_prob}. Hence, the distribution induced by the PME respects the partition structure of $\Omega_t$ since it selects exactly one agent-action pair from each $\Omega_{i,t}$ w.p.1, matching the categorical action-selection model in \eqref{eq:factorized_policy}.

\begin{remark}
At a fixed time step, the PME is closely related to the
policy-based continuous extension in \cite{zhang2025effective} and the Multinoulli Extension in \cite{zhang2025multinoulli}. The former considers online coordination over fixed agent and action sets, whereas the latter addresses static subset selection under fixed partition constraints. Since these formulations do not model controlled Markovian dynamics or changes in the active-agent set, they cannot capture the coupling among current decisions, future states, and utilities. Our formulation extends the PME to finite-horizon decentralized policy learning for an OMAS and establishes exact equivalence with both the expected stage utility and the finite-horizon policy objective.
\end{remark}

\begin{lemma}[Policy-PME Objective Equivalence~\cite{liu2026submodular}]
\label{lem:policy_pme_equivalence}
Let $\pi_t=\prod_{i\in\mathcal N_t}\pi^i$ be a factorized categorical policy, and let $x_t^\pi(s_t)$ be defined by \eqref{eq:policy_induced_marginal}. Then, for any fixed $s_t$,
\begin{equation}
\label{eq:policy_pme_equivalence}
\mathbb E_{A_t\sim\pi_t}
\left[
F_t(A_t;s_t)\mid s_t
\right]
=
\tilde f_t\bigl(x_t^\pi(s_t);s_t\bigr).
\end{equation}
\end{lemma}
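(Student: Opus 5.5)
The plan is to compute both sides of \eqref{eq:policy_pme_equivalence} as sums over the same finite index set and show that the weights agree term by term. Fix $s_t$; then the observations $o_{i,t}=O_{i,t}(s_t)$ are deterministic. The conditional expectation on the left is therefore taken only over the independent local draws $a_{i,t}\sim\pi^i(\cdot\mid o_{i,t})$, $i\in\mathcal N_t$.

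First, write the left-hand side explicitly using the factorization \eqref{eq:factorized_policy}:
\begin{equation*}
\mathbb E_{A_t\sim\pi_t}\left[F_t(A_t;s_t)\mid s_t\right]
=\sum_{\mathbf a\in\prod_{i\in\mathcal N_t}\mathcal A_i}
F_t\bigl(\{(i,a_i)\}_{i\in\mathcal N_t};s_t\bigr)\prod_{i\in\mathcal N_t}\pi^i(a_i\mid o_{i,t}).
\end{equation*}

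Next, observe that $x=x_t^\pi(s_t)$ lies in the categorical face $\mathcal F_t\subseteq\mathcal P(\mathcal I_t)$, because each $\pi^i(\cdot\mid o_{i,t})$ sums to one. So $\tilde f_t(x;s_t)$ is well defined.

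Then expand the right-hand side via \eqref{eq:pme_def_sum}. Split $\mathcal I_t$ into two classes:
\begin{enumerate}[label=(\roman*)]
\item sets $A$ with $A\cap\Omega_{i,t}=\emptyset$ for some active agent $i$;
\item sets $A$ containing exactly one pair from every $\Omega_{i,t}$.
\end{enumerate}
For class (i), \eqref{eq:agent_prob} gives $p_i(A;x)=1-\sum_{a'}x_{(i,a')}=0$ on $\mathcal F_t$, so the product weight vanishes and these terms drop out. Class (ii) is in bijection with joint actions $\mathbf a\in\prod_i\mathcal A_i$ via $\mathbf a\mapsto\{(i,a_i)\}_{i\in\mathcal N_t}$. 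Under this bijection the weight becomes $\prod_i x_{(i,a_i)}=\prod_i\pi^i(a_i\mid o_{i,t})$ by \eqref{eq:policy_induced_marginal}. Matching terms then gives equality.

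There is no real obstacle. The only care needed is to check that the map from joint actions to sets in class (ii) is a bijection. It is injective because the pairs $(i,a_i)$ carry the agent index, so distinct agents never collide. It is surjective by the definition of $\mathcal I_t$, which allows at most one pair per agent. One should also check that the vanishing weights in class (i) are treated correctly; this holds because $F_t$ is finite, so each such term is exactly zero.
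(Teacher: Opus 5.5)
Your proof is correct and matches the paper's argument. Both fix $s_t$ and write the conditional expectation as a finite sum over $\mathcal I_t$. Both use that the ``select nothing'' factor in \eqref{eq:agent_prob} vanishes on $\mathcal F_t$, and identify the surviving weights $\prod_i p_i(A;x_t^\pi(s_t))$ with $\prod_i \pi^i(a_i\mid o_{i,t})$.

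Your explicit bijection between joint actions and full-selection sets only spells out bookkeeping the paper leaves implicit. One small misattribution: surjectivity comes from $A\subseteq\Omega_t=\bigcup_i\Omega_{i,t}$. The ``at most one pair per agent'' condition is instead what makes your classes (i) and (ii) exhaust $\mathcal I_t$.
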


\begin{proof}
Consider the current state $s_t$ and the current open-system domain. Then the local observations $\{o_{i,t}\}_{i\in\mathcal N_t}$ are fixed, and each agent samples independently from
$\pi^i(\cdot\mid o_{i,t})$. Hence, the executed set is $ A_t=\{(i,a_{i,t}):i\in\mathcal N_t\}, \ a_{i,t}\sim\pi^i(\cdot\mid o_{i,t})$. By definition of $x_t^\pi(s_t)$, the per-agent factor in \eqref{eq:agent_prob} satisfies $ p_i(A;x_t^\pi(s_t)) = \pi^i(a\mid o_{i,t}) \ \text{if } A\cap\Omega_{i,t}=\{(i,a)\}$. Moreover, since $\pi^i(\cdot\mid o_{i,t})$ is a categorical distribution, $ \sum_{a\in\mathcal A_{i}}x^\pi_{(i,a),t}(s_t) =  1$, the probability of selecting no action in \eqref{eq:agent_prob} is zero for every active agent. Therefore, under $\mathcal D_t(\cdot\mid x_t^\pi(s_t))$, positive probability
is assigned only to feasible sets that select exactly one action per active agent, and this probability equals the product of the corresponding local categorical probabilities. Thus, for any feasible $A\in\mathcal I_t$,  $\Pr(A_t=A\mid s_t)  =  \prod_{i\in\mathcal N_t}   p_i(A;x_t^\pi(s_t))$. Substituting this identity into the conditional expectation gives
\begin{align*}
  \mathbb E_{A_t\sim\pi_t}
  \left[
  F_t(A_t;s_t)\mid s_t
  \right]
  &=
  \sum_{A\in\mathcal I_t}
  F_t(A;s_t)
  \Pr(A_t=A\mid s_t) \\
  &=
  \sum_{A\in\mathcal I_t}
  F_t(A;s_t)
  \prod_{i\in\mathcal N_t}
  p_i(A;x_t^\pi(s_t)) \\
  &=
  \tilde f_t\bigl(x_t^\pi(s_t);s_t\bigr),
\end{align*}
which proves \eqref{eq:policy_pme_equivalence}. 
\end{proof}

Let $\mathcal D_{t}^{-i}(\cdot\mid x)$ denote the joint distribution induced by $x$ over the local action blocks of all agents except $i$, namely $\{\Omega_{j,t}\}_{j\in\mathcal N_t\setminus\{i\}}$.
\begin{lemma}[PME Coordinate Gradient~\cite{liu2026submodular}]
\label{lem:pme_coordinate_gradient}
For any $(i,a)\in\Omega_t$ and $x\in\mathcal P(\mathcal I_t)$,
\begin{equation}
\label{eq:pme_coordinate_gradient}
\frac{\partial \tilde f_t}{\partial x_{(i,a)}}(x;s_t)=
\mathbb E_{A^{-i}\sim\mathcal D_{t}^{-i}(\cdot\mid x)}
\left[
F_t\bigl((i,a)\mid A^{-i};s_t\bigr)
\right],
\end{equation}
where
$F_t\bigl((i,a)\mid A^{-i};s_t\bigr)=
F_t\bigl(A^{-i}\cup\{(i,a)\};s_t\bigr)-F_t(A^{-i};s_t)$
denotes the marginal gain of adding the agent-action pair $(i,a)$ to $A^{-i}$ at state $s_t$.
\end{lemma}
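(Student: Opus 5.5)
The proof exploits the product structure of \eqref{eq:pme_def_sum}: $\tilde f_t(\cdot;s_t)$ is affine in the block $x_i=(x_{(i,a')})_{a'\in\mathcal A_i}$ when the other blocks are held fixed. Fix $(i,a)\in\Omega_t$ and $x\in\mathcal P(\mathcal I_t)$. Every $A\in\mathcal I_t$ decomposes uniquely as $A=A^{-i}\cup B$, where $A^{-i}=A\setminus\Omega_{i,t}$ is a feasible selection for the agents $j\neq i$ and $B=A\cap\Omega_{i,t}$ is either $\emptyset$ or a singleton $\{(i,a')\}$. Because the product in \eqref{eq:pme_def_sum} factors over agents, the joint weight equals $p_i(A;x)\prod_{j\neq i}p_j(A;x)$. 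The second factor is exactly the probability of $A^{-i}$ under $\mathcal D_t^{-i}(\cdot\mid x)$, and it does not depend on $x_i$.

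Conditioning on agent $i$'s block then gives
\begin{equation*}
\tilde f_t(x;s_t)=\mathbb E_{A^{-i}\sim\mathcal D_t^{-i}(\cdot\mid x)}\Bigl[\Bigl(1-\sum_{a'\in\mathcal A_i}x_{(i,a')}\Bigr)F_t(A^{-i};s_t)+\sum_{a'\in\mathcal A_i}x_{(i,a')}F_t\bigl(A^{-i}\cup\{(i,a')\};s_t\bigr)\Bigr].
\end{equation*}
This is a finite sum of polynomial terms, and the expectation weights do not involve $x_i$. We may therefore differentiate term by term with respect to $x_{(i,a)}$. The coefficient of $x_{(i,a)}$ inside the bracket is $F_t(A^{-i}\cup\{(i,a)\};s_t)-F_t(A^{-i};s_t)=F_t((i,a)\mid A^{-i};s_t)$. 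Taking the expectation of this coefficient yields \eqref{eq:pme_coordinate_gradient}.

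The only delicate step is the bookkeeping in the decomposition. We must check that the map $A\mapsto(A^{-i},B)$ is a bijection between $\mathcal I_t$ and the set of pairs, which holds because the partition matroid constraints are imposed block by block. We must also check that $x_i$ enters only through the factor $p_i$. Two minor remarks complete the argument. First, the partial derivative is understood for the polynomial extension of $\tilde f_t$ to a neighborhood of the polytope, or one-sidedly on the boundary; this causes no difficulty since the function is multilinear in the coordinates. Second, $(i,a)\notin A^{-i}$ always holds, so the marginal gain is well defined.
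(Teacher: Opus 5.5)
Your proof is correct and follows essentially the same route as the paper. Both arguments use that $x_{(i,a)}$ enters \eqref{eq:pme_def_sum} only through the affine factor $p_i(A;x)$, and that $\prod_{j\neq i}p_j(A;x)$ is the probability of $A^{-i}$ under $\mathcal D_t^{-i}(\cdot\mid x)$; the paper differentiates first (getting $+1$, $0$, $-1$ for the three cases of $A\cap\Omega_{i,t}$) and then regroups by $A^{-i}$, whereas you regroup first into an expression affine in $x_i$ and read off the coefficient of $x_{(i,a)}$, which is the same computation in the opposite order.
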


\begin{proof}
Consider an arbitrary $(i,a)\in\Omega_t$. In the sum representation in \eqref{eq:pme_def_sum}, the coordinate $x_{(i,a)}$ appears only in the local factor $p_i(A;x)$. There are three cases. If $A\cap\Omega_{i,t}=\{(i,a)\}$, then $\partial p_i(A;x)/\partial x_{(i,a)}=1$. If $A\cap\Omega_{i,t}=\{(i,a')\}$ for some $a'\ne a$, then $\partial p_i(A;x)/\partial x_{(i,a)}=0$. If $A\cap\Omega_{i,t}=\emptyset$, then $\partial p_i(A;x)/\partial x_{(i,a)}=-1$. Applying the product rule to \eqref{eq:pme_def_sum} gives
\begin{align*}
  \frac{\partial \tilde f_t}{\partial x_{(i,a)}}(x;s_t)
  &=
  \sum_{A:(i,a)\in A}
  F_t(A;s_t)
  \prod_{k\ne i}p_k(A;x)  \\
  &\quad -
  \sum_{A:A\cap\Omega_{i,t}=\emptyset}
  F_t(A;s_t)
  \prod_{k\ne i}p_k(A;x).
\end{align*}
Every set in the first sum can be written uniquely as $A^{-i}\cup\{(i,a)\}$, while every set in the second sum can be identified with the same partial joint action $A^{-i}$ of all agents except $i$. Moreover, $\prod_{k\ne i}p_k(A;x)$ is exactly the probability of $A^{-i}$ under $\mathcal D_{t}^{-i}(\cdot\mid x)$. Hence
\begin{align*}
  & \frac{\partial \tilde f_t}{\partial x_{(i,a)}}(x;s_t)\\
  &=
  \sum_{A^{-i}}
  \Pr_{\mathcal D_{t}^{-i}(\cdot\mid x)}(A^{-i})
  \left[
  F_t(A^{-i}\cup\{(i,a)\};s_t)
  -
  F_t(A^{-i};s_t)
  \right]  \\
  &=
  \mathbb E_{A^{-i}\sim\mathcal D_{t}^{-i}(\cdot\mid x)}
  \left[
  F_t\big((i,a)\mid A^{-i};s_t\big)
  \right].
\end{align*}
This proves \eqref{eq:pme_coordinate_gradient}.
\end{proof}

\cref{lem:pme_coordinate_gradient} shows that partial derivative of PME is the expected marginal gain of the corresponding agent-action pair under the current categorical sampling distribution of the other agents. This identity provides the first-order information used by the counterfactual marginal-gain estimator in \cref{sec:algorithm}.

\begin{proposition}[Properties of the PME~\cite{liu2026submodular}]
\label{pro:pme_properties}
Under \cref{ass:standing_conditions},
$\tilde f_t(\cdot;s_t)$ satisfies the following properties on
$\mathcal P(\mathcal I_t)$:
\begin{enumerate}
  \item $\tilde f_t(0;s_t)=0$.
  \item $\tilde f_t(x;s_t)\ge0$ for all $x\in\mathcal P(\mathcal I_t)$.
  \item $\tilde f_t(\cdot\,;s_t)$ is a multilinear polynomial and hence is smooth.
  \item $\nabla \tilde f_t(x;s_t)\ge0$ componentwise for all $x\in\mathcal P(\mathcal I_t)$.
  \item For any $x,y\in\mathcal P(\mathcal I_t)$ with $x\le y$ componentwise,
  $\nabla \tilde f_t(x;s_t)\ge\nabla \tilde f_t(y;s_t)$ componentwise.
\end{enumerate}
Consequently, $\tilde f_t(\cdot\,;s_t)$ is monotone and DR-submodular on $\mathcal P(\mathcal I_t)$.
\end{proposition}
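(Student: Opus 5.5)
We verify the five items directly from \cref{def:pme} and \cref{lem:pme_coordinate_gradient}, then read off monotonicity and DR-submodularity from items 4 and 5.

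For item 1, at $x=0$ every agent selects no element with probability one in \eqref{eq:agent_prob}, so $\mathcal D_t(\cdot\mid 0)$ is the point mass on $\emptyset$. Hence $\tilde f_t(0;s_t)=F_t(\emptyset;s_t)=0$ by normalization. Item 2 is immediate from \eqref{eq:pme_def_expectation}: it is an expectation of $F_t\ge0$ under a genuine probability distribution. This needs $x\in\mathcal P(\mathcal I_t)$, so that every $p_i(A;x)\in[0,1]$ and the probabilities sum to one.

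For item 3, each factor $p_i(A;x)$ in \eqref{eq:pme_def_sum} is affine in the block $x_i=(x_{(i,a)})_{a\in\mathcal A_i}$ and involves no other block. The product over $i$ is therefore affine in each block separately, hence a polynomial, which is smooth. I would phrase ``multilinear'' as multiaffine in the agent blocks, with degree at most one in each coordinate. The only subtlety is that monomials such as $x_{(i,a)}x_{(i,a')}$ never occur, since two coordinates from the same block never multiply.

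Item 4 follows from \cref{lem:pme_coordinate_gradient}. Each partial derivative is an expectation of $F_t((i,a)\mid A^{-i};s_t)$, and this marginal gain is nonnegative by monotonicity of $F_t$.

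Item 5 is the main step. Take $x\le y$ in $\mathcal P(\mathcal I_t)$ and fix $(i,a)$. By \cref{lem:pme_coordinate_gradient} both partials are expectations of the marginal gain of $(i,a)$ against $A^{-i}$, drawn respectively from $\mathcal D_t^{-i}(\cdot\mid x)$ and $\mathcal D_t^{-i}(\cdot\mid y)$. I would build a monotone coupling independently for each agent $j\ne i$, using a single uniform variable $U_j\in[0,1]$.
\begin{enumerate}
\item Partition $[0,1]$ into consecutive intervals of length $y_{(j,b)}$, one per action $b$, followed by a ``none'' interval of length $1-\sum_b y_{(j,b)}$.
\item Inside each action interval, reserve the initial sub-interval of length $x_{(j,b)}\le y_{(j,b)}$. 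Under $x$, agent $j$ selects $(j,b)$ when $U_j$ falls in that sub-interval, and selects nothing otherwise.
\end{enumerate}
Under the $x$-rule, agent $j$ selects $(j,b)$ with probability $x_{(j,b)}$ and nothing with probability $1-\sum_b x_{(j,b)}$, so the marginals are correct. Whenever agent $j$ selects $(j,b)$ under $x$, it selects the same $(j,b)$ under $y$. Hence the coupled samples satisfy $A^{-i}_x\subseteq A^{-i}_y$ almost surely. Submodularity of $F_t$ gives $F_t((i,a)\mid A^{-i}_x;s_t)\ge F_t((i,a)\mid A^{-i}_y;s_t)$ pointwise, and taking expectations yields item 5. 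The element $(i,a)$ never lies in $A^{-i}_y$, so the marginal gain is always well-defined as in \cref{def:submodular_set_function}.

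Finally, we turn to the two consequences, using that $\mathcal P(\mathcal I_t)$ is down-closed and convex. For $x\le y$, integrating the nonnegative gradient along the segment from $x$ to $y$ gives $\tilde f_t(x;s_t)\le\tilde f_t(y;s_t)$, which is monotonicity. Item 5, that the gradient is antitone, is the standard first-order characterization of DR-submodularity for differentiable functions, which we invoke on $\mathcal P(\mathcal I_t)$.
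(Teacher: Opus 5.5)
Your proof is correct. Items 1--4 follow the paper essentially verbatim; the genuine difference is in item 5.

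The paper argues at second order. It observes that $\partial^2\tilde f_t/\partial x_{(i,a)}\partial x_{(i,b)}=0$ because the PME is affine in each agent block. For $i\ne u$, it conditions on the agents other than $i,u$, differentiates \eqref{eq:agent_u_contribution}, and writes the mixed partial as $\mathbb E[F_t((i,a)\mid A^{-\{i,u\}}\cup\{(u,b)\};s_t)-F_t((i,a)\mid A^{-\{i,u\}};s_t)]\le 0$. The resulting entrywise-nonpositive Hessian then gives the antitone gradient, implicitly by moving from $x$ to $y$ inside $\mathcal P(\mathcal I_t)$.

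You instead compare the two first-order expectations from \cref{lem:pme_coordinate_gradient} directly. Your monotone coupling has correct $x$- and $y$-marginals, fits in $[0,1]$ because $y\in\mathcal P(\mathcal I_t)$, and gives $A^{-i}_x\subseteq A^{-i}_y$ almost surely. Submodularity is then applied pathwise, with no second derivatives and no path integration.

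What each route buys:
\begin{enumerate}
\item Your route is more elementary. It yields the stronger fact that $\mathcal D_t^{-i}(\cdot\mid y)$ dominates $\mathcal D_t^{-i}(\cdot\mid x)$ in the set-inclusion order. It would also carry over to extensions that are not polynomials.
\item The paper's route gives an explicit formula for the mixed partials. It also matches the Hessian-sign characterization of DR-submodularity in \cite{bian2017guaranteed}, which is the form in which such properties are usually invoked.
\end{enumerate}
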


\begin{proof}
We prove the properties in order.

1) 
When $x=0$, each agent selects no element with probability one under $\mathcal D_t(\cdot\mid 0)$. Therefore, the only set with nonzero probability is $\emptyset$, and $\tilde f_t(0;s_t)=F_t(\emptyset;s_t)=0$.

2) By nonnegativity of $F_t$ and by the fact that all factors $p_i(A;x)$ are valid probabilities for $x\in\mathcal P(\mathcal I_t)$, every term in \eqref{eq:pme_def_sum} is nonnegative. Thus, $\tilde f_t(x;s_t)\ge0$.

3) For each fixed set $A\in\mathcal I_t$, every factor $p_i(A;x)$ is affine in the coordinates $\{x_{(i,a)}:a\in\mathcal A_{i}\}$. Since these
coordinate sets are disjoint across agents, the product $\prod_{i\in\mathcal N_t}p_i(A;x)$ is multilinear in the variables associated with different agents. Hence $\tilde f_t(\cdot\,;s_t)$ is a finite sum of multilinear polynomials and is smooth.

4) By \cref{lem:pme_coordinate_gradient}, for any $(i,a)\in\Omega_t$, $\frac{\partial \tilde f_t}{\partial x_{(i,a)}}(x;s_t) = \mathbb E_{A^{-i}\sim\mathcal D_{t}^{-i}(\cdot\mid x)}
  \left[ F_t\big((i,a)\mid A^{-i};s_t\big) \right]$.

For every realization $A^{-i}$ in the support of $\mathcal D_{t}^{-i}(\cdot\mid x)$, we have $A^{-i}\subseteq A^{-i}\cup\{(i,a)\}$. Since $F_t(\cdot\,;s_t)$ is monotone,
$F_t\big((i,a)\mid A^{-i};s_t\big) = F_t\big(A^{-i}\cup\{(i,a)\};s_t\big)-F_t(A^{-i};s_t) \ge 0 $.
Taking expectation preserves the inequality, and hence
$\frac{\partial \tilde f_t}{\partial x_{(i,a)}}(x;s_t) \ge 0, \ (i,a) \in \Omega_t$. Therefore, $\nabla\tilde f_t(x;s_t)\ge0$ componentwise.

5) We prove the DR property by showing that the gradient is componentwise nonincreasing. For each agent $i$, the PME is affine in the local coordinates
$\{x_{(i,a)}:a\in\mathcal A_{i}\}$, because the per-agent factor $p_i(A;x)$ in \eqref{eq:agent_prob} is affine in these coordinates and no
product contains two coordinates associated with the same agent. Hence, all second partial derivatives with respect to two coordinates of the same agent are zero:
\begin{equation}
\frac{\partial^2 \tilde f_t}
{\partial x_{(i,a)}\partial x_{(i,b)}}(x;s_t)=0,
\ a,b\in\mathcal A_{i}.
\label{eq:same_agent_second_partial}
\end{equation}

Consider two distinct agents $i\ne u$ and actions $a\in\mathcal A_{i}$ and $b\in\mathcal A_{u}$. 
Let $\mathcal D_{t}^{-\{i,u\}}(\cdot\mid x)$ denote the product
distribution induced by $x$ over all agents in
$\mathcal N_t\setminus\{i,u\}$. By~\cref{lem:pme_coordinate_gradient},
\begin{equation}
\frac{\partial \tilde f_t}{\partial x_{(i,a)}}(x;s_t)
=
\mathbb E_{A^{-i}\sim\mathcal D_{t}^{-i}(\cdot\mid x)}
\left[
F_t\big((i,a)\mid A^{-i};s_t\big)
\right].
\label{eq:first_partial_for_second}
\end{equation}
To differentiate this expression with respect to $x_{(u,b)}$, separate the random choice of agent $u$ from the choices of the other agents. For any fixed realization $A^{-\{i,u\}}$ of the agents other than $i$ and $u$, the contribution of agent $u$ to \eqref{eq:first_partial_for_second} is
\begin{align}
&\left(1-\sum_{c\in\mathcal A_{u}}x_{(u,c)}\right)
F_t\big((i,a)\mid A^{-\{i,u\}};s_t\big)
\nonumber\\
&\quad
+\sum_{c\in\mathcal A_{u}}x_{(u,c)}
F_t\big((i,a)\mid A^{-\{i,u\}}\cup\{(u,c)\};s_t\big).
\label{eq:agent_u_contribution}
\end{align}
Differentiating \eqref{eq:agent_u_contribution} with respect to $x_{(u,b)}$ gives
\begin{equation}
\begin{aligned}
&\frac{\partial^2 \tilde f_t}
{\partial x_{(u,b)}\partial x_{(i,a)}}(x;s_t)\\
&=
\mathbb E_{A^{-\{i,u\}}\sim\mathcal D_t^{-\{i,u\}}(\cdot\mid x)}
\Big[
F_t\big((i,a)\mid A^{-\{i,u\}}\cup\{(u,b)\};s_t\big)\\
&\qquad\qquad\qquad\qquad
-
F_t\big((i,a)\mid A^{-\{i,u\}};s_t\big)
\Big].
\end{aligned}
\label{eq:cross_second_partial}
\end{equation}
Since $A^{-\{i,u\}}\subseteq A^{-\{i,u\}}\cup\{(u,b)\}$, submodularity of $F_t(\cdot\,;s_t)$ implies
$F_t\big((i,a)\mid A^{-\{i,u\}}\cup\{(u,b)\};s_t\big)
- F_t\big((i,a)\mid A^{-\{i,u\}};s_t\big)
\le 0$.
Taking expectation in \eqref{eq:cross_second_partial} yields
\begin{equation}
\frac{\partial^2 \tilde f_t}
{\partial x_{(u,b)}\partial x_{(i,a)}}(x;s_t)
\le 0,
\ i\ne u.
\label{eq:cross_second_partial_nonpositive}
\end{equation}
Equations \eqref{eq:same_agent_second_partial} and \eqref{eq:cross_second_partial_nonpositive} show that each coordinate derivative of $\tilde f_t$ is nonincreasing in every coordinate. Therefore, for any $x,y\in\mathcal P(\mathcal I_t)$ with $x\le y$ componentwise,
$\nabla \tilde f_t(x;s_t)\ge \nabla \tilde f_t(y;s_t)$
componentwise. This establishes DR-submodularity on $\mathcal P(\mathcal I_t)$.
\end{proof}
\cref{pro:pme_properties} shows that the PME is a smooth, monotone, DR-submodular objective on $\mathcal P(\mathcal I_t)$ while preserving the per-agent categorical sampling structure required for decentralized execution~\cite{bian2017guaranteed,hassani2017gradient}.

\subsection{Problem Equivalence}
\label{sec:pme_equivalence}
In this section,
we relate the PME to \cref{prob:main}.
First, we relate the continuous PME optimum to the discrete stagewise
submodular optimum.
Then,
we extend the equivalence to the finite-horizon objective $J(\pi)$.

Although $\tilde f_t(\cdot;s_t)$ is defined over the
partition-matroid polytope $\mathcal P(\mathcal I_t)$, monotonicity
implies that an optimal solution can be chosen on the
categorical-policy face $\mathcal F_t$.

\begin{lemma}[PME Exactness on the Categorical Face]
\label{lem:pme_exactness}
Under \cref{ass:standing_conditions}, the following stagewise
identity holds at each time step $t$ and state $s_t$:
\begin{equation}
\label{eq:pme_exactness}
\max_{x\in\mathcal P(\mathcal I_t)}
\tilde f_t(x;s_t)
=
\max_{x\in\mathcal F_t}
\tilde f_t(x;s_t)
=
\max_{A\in\mathcal I_t}
F_t(A;s_t).
\end{equation}

\end{lemma}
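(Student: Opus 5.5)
I will prove \eqref{eq:pme_exactness} by establishing a chain of inequalities
\[
\max_{A\in\mathcal I_t}F_t(A;s_t)\le\max_{x\in\mathcal F_t}\tilde f_t(x;s_t)\le\max_{x\in\mathcal P(\mathcal I_t)}\tilde f_t(x;s_t)\le\max_{A\in\mathcal I_t}F_t(A;s_t).
\]
All maxima exist: $\mathcal P(\mathcal I_t)$ and $\mathcal F_t$ are compact, $\tilde f_t$ is continuous by \cref{pro:pme_properties}, and $\mathcal I_t$ is finite. The middle inequality is immediate from $\mathcal F_t\subseteq\mathcal P(\mathcal I_t)$.

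\textbf{Last inequality.} By \eqref{eq:pme_def_expectation}, for any $x\in\mathcal P(\mathcal I_t)$ the quantity $\tilde f_t(x;s_t)$ is an expectation of $F_t(A;s_t)$ under $\mathcal D_t(\cdot\mid x)$. This distribution is supported on $\mathcal I_t$. An average never exceeds its maximum, so $\tilde f_t(x;s_t)\le\max_{A\in\mathcal I_t}F_t(A;s_t)$.

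\textbf{First inequality.} Take a maximizer $A^\star\in\mathcal I_t$. First extend it to a set $\bar A$ that contains exactly one pair from every block $\Omega_{i,t}$. For each active agent $i$ with $A^\star\cap\Omega_{i,t}=\emptyset$, add an arbitrary $(i,a)$ with $a\in\mathcal A_i$; such an action exists by \cref{ass:standing_conditions}(ii). Then $\bar A\in\mathcal I_t$, and monotonicity gives $F_t(\bar A;s_t)\ge F_t(A^\star;s_t)$.

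Next let $\bar x$ be the indicator vector of $\bar A$. Each block of $\bar x$ sums to one, so $\bar x\in\mathcal F_t$. Under $\mathcal D_t(\cdot\mid\bar x)$ every factor $p_i(A;\bar x)$ in \eqref{eq:agent_prob} equals $1$ if $A\cap\Omega_{i,t}=\bar A\cap\Omega_{i,t}$ and $0$ otherwise. Hence the distribution is a point mass at $\bar A$, and $\tilde f_t(\bar x;s_t)=F_t(\bar A;s_t)\ge F_t(A^\star;s_t)$.

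Combining the three inequalities closes the chain and proves all the equalities in \eqref{eq:pme_exactness}. None of the steps is deep. The only point needing care is the completion step, which is where monotonicity and the nonemptiness of each $\mathcal A_i$ are used. Without it, a maximizer that leaves some agent idle would not correspond to a point on the categorical face. An alternative route for the face equality would push a point of $\mathcal P(\mathcal I_t)$ up to $\mathcal F_t$ using nonnegative gradients (\cref{pro:pme_properties}(4)), but the vertex argument above avoids this.
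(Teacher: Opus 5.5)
Your proof is correct, but for the upper bound you take a different and more elementary route than the paper.

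The paper argues in two continuous steps. First, it takes a maximizer over $\mathcal P(\mathcal I_t)$ and raises every slack block to the categorical face $\mathcal F_t$, using monotonicity of $\tilde f_t$ from \cref{pro:pme_properties}. Second, it uses that $\tilde f_t$ is affine in each local block $x_i\in\mathcal F_i$ to move, agent by agent, to an extreme point of every simplex, that is, to a deterministic feasible set.

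You replace both steps with one observation: $\tilde f_t(x;s_t)$ is an average of $F_t$ over a distribution supported on $\mathcal I_t$, so it cannot exceed $\max_{A\in\mathcal I_t}F_t(A;s_t)$. Your lower bound is the same as the paper's converse step: you complete $A^\star$ using monotonicity and nonemptiness of each $\mathcal A_i$, then note that $\mathcal D_t(\cdot\mid\bar x)$ is a point mass at the indicator vertex.

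Your version is shorter and sidesteps the somewhat informal ``apply successively'' vertex argument. It also makes explicit that only monotonicity of $F_t$ is needed, and only in the completion step. Submodularity, normalization and the gradient properties of the PME play no role.

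The paper's version instead exercises the continuous structure of the relaxation: coordinatewise monotonicity and blockwise affinity. Its push-up step gives a value-nondecreasing map from any point of $\mathcal P(\mathcal I_t)$ to $\mathcal F_t$. That fits the face-restricted viewpoint of the later mirror-ascent analysis.
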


\begin{proof}
We first show that the optimum of $f_t(\cdot;s_t)$ over $\mathcal P(\mathcal I_t)$ is attained on $\mathcal F_t$. Let $x^\star\in\arg\max_{x\in\mathcal P(\mathcal I_t)}\tilde f_t(x;s_t)$. If $x^\star\in\mathcal F_t$, the claim is immediate. Otherwise, there exists an agent $i\in\mathcal N_t$ for which the corresponding partition constraint is nonactive:
$ \rho_i =  1-\sum_{a\in\mathcal A_{i}}x^\star_{(i,a)} > 0$.
Choose any action $a_i\in\mathcal A_{i}$ and define $x'=x^\star+\rho_i e_{(i,a_i)}$, where $e_{(i,a_i)}$ is the canonical vector with nonzero coordinate $(i,a_i)$. Then $x'\in\mathcal P(\mathcal I_t)$ and the $i$-th partition constraint becomes tight. By monotonicity of $\tilde f_t$ from \cref{pro:pme_properties}, $\tilde f_t(x';s_t)\ge\tilde f_t(x^\star;s_t)$. 
Repeating this operation for every agent $i$ such that $\sum_{a\in\mathcal A_{i}}x^\star_{(i,a)}<1$ yields a point $\bar x\in\mathcal F_t$ with $\tilde f_t(\bar x;s_t)\ge\tilde f_t(x^\star;s_t)$. Since $x^\star$ is optimal over $\mathcal P(\mathcal I_t)$, equality holds. Thus, $\max_{x\in\mathcal P(\mathcal I_t)}
  \tilde f_t(x;s_t) = \max_{x\in\mathcal F_t}
  \tilde f_t(x;s_t)$.

It remains to relate the optimum over $\mathcal F_t$ to the discrete optimum over $\mathcal I_t$. The set $\mathcal F_t$ is a product of local categorical simplices. Fix the local marginals of all agents except $i$. Then the PME is affine in the local vector $x_i\in\mathcal F_{i}$. Therefore, optimizing $\tilde f_t$ over $x_i$ admits an optimal solution at an extreme point of $\mathcal F_{i}$. Applying this argument successively to all active agents, there exists a maximizer of $\tilde f_t$ over $\mathcal F_t$ that is an extreme point of every local categorical simplex. Such a point assigns probability one to exactly one action for each active agent and therefore corresponds to a deterministic feasible set in $\mathcal I_t$.

Conversely, let $A^\star\in\arg\max_{A\in\mathcal I_t}F_t(A;s_t)$. If $A^\star$ omits an active agent, monotonicity of $F_t$ allows us to add any action from that agent's local action set without decreasing the utility. Repeating this step gives a feasible set $\bar A^\star\in\mathcal I_t$ that selects exactly one action per active agent and satisfies $F_t(\bar A^\star;s_t)\ge F_t(A^\star;s_t)$. The indicator vector $\mathbf 1_{\bar A^\star}$ lies in $\mathcal F_t$, and by the PME definition at deterministic vertices, $ \tilde f_t(\mathbf 1_{\bar A^\star};s_t) =  F_t(\bar A^\star;s_t)$.

Therefore the continuous optimum over $\mathcal F_t$ and the discrete optimum over $\mathcal I_t$ have the same value.
\end{proof}

\cref{lem:pme_exactness} shows that, for monotone utilities, the PME admits an optimal solution on the categorical-policy face $\mathcal F_t$. Therefore, restricting to factorized categorical policies does not reduce the optimal value.

The stagewise equivalence in \cref{lem:policy_pme_equivalence}
yields an exact reformulation of the finite-horizon objective in
\cref{prob:main}. The following corollary shows that the original
expected cumulative submodular utility is equal to the cumulative PME
objective evaluated along the trajectory induced by the factorized
categorical policy sequence.

\begin{corollary}[Problem Equivalence]
\label{cor:cumulative_equivalence}
Under the factorized categorical policy sequence
$\pi=\{\pi_t\}_{t=1}^{T}$, the finite-horizon objective in
\cref{prob:main} is exactly equal to the expected cumulative PME
objective evaluated along the induced state trajectory:
\begin{equation}
\label{eq:pme_policy_objective}
J(\pi)
=
\mathbb E_{\tau\sim p^\pi}
\left[
\sum_{t=1}^{T}
\tilde f_t\bigl(x_t^\pi(s_t);s_t\bigr)
\right],
\end{equation}
where $p^\pi$ is the trajectory distribution induced by $\pi$ and the
transition kernels $\{P_t\}_{t=1}^{T}$.
\end{corollary}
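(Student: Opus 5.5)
The plan is to move the expectation inside the sum and apply \cref{lem:policy_pme_equivalence} to each stage separately. By linearity of expectation and finiteness of the horizon,
\[
J(\pi)=\sum_{t=1}^{T}\mathbb E_{\tau\sim p^\pi}\bigl[F_t(A_t;s_t)\bigr].
\]
Each summand only involves the pair $(s_t,A_t)$. Under $p^\pi$, the joint law of this pair factorizes as the state marginal $p^\pi_t(s_t)$ times the conditional law of $A_t$ given $s_t$. By the construction of the trajectory distribution, that conditional law is the factorized categorical policy $\pi_t(\cdot\mid\mathbf o_t)$ with $\mathbf o_t=(O_{i,t}(s_t))_{i\in\mathcal N_t}$. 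We then apply the tower property, conditioning on $s_t$, or more carefully on the pre-sampling history $(s_1,\mathbf a_1,\ldots,s_t)$.

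The key step is to verify that, conditionally on this history, $A_t$ is distributed exactly as in \cref{lem:policy_pme_equivalence}. Three facts support this.
\begin{itemize}
\item The agents sample $a_{i,t}\sim\pi^i(\cdot\mid o_{i,t})$ independently.
\item The observations $o_{i,t}$ are deterministic functions of $s_t$.
\item The active set $\mathcal N_t$ is independent of the policies and the transitions, by the standing assumption on arrivals and departures, so it can be treated as fixed, or conditioned on as well.
\end{itemize}
Consequently the conditional law depends on the history only through $s_t$. We can then write
\[
\mathbb E\bigl[F_t(A_t;s_t)\mid s_1,\mathbf a_1,\ldots,s_t\bigr]=\mathbb E_{A_t\sim\pi_t}\bigl[F_t(A_t;s_t)\mid s_t\bigr]=\tilde f_t\bigl(x_t^\pi(s_t);s_t\bigr),
\]
where the last equality is \cref{lem:policy_pme_equivalence}.

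Taking outer expectations recovers $\mathbb E_{\tau\sim p^\pi}[\tilde f_t(x_t^\pi(s_t);s_t)]$. This uses the fact that $\tilde f_t(x_t^\pi(s_t);s_t)$ is a measurable function of $s_t$ alone, so its expectation under the marginal of $s_t$ coincides with its expectation under the full trajectory law. Summing over $t\in[T]$ and using linearity again yields \eqref{eq:pme_policy_objective}.

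The main point needing care is the conditioning step: we must make precise that the Markov structure of $p^\pi$, the policy-independence of the open-agent process, and the deterministic observation maps together imply the conditional action law equals $\pi_t(\cdot\mid\mathbf o_t)$ given $s_t$. Integrability is not an issue: $F_t$ takes finitely many values over the finite set $\mathcal I_t$ for each $s_t$, so if $F_t$ is bounded the swap of sum and expectation is justified. Otherwise nonnegativity of $F_t$ suffices via Tonelli's theorem.
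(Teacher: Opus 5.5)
Your proposal is correct and follows essentially the same route as the paper. Both arguments apply linearity of expectation, then the tower property conditioning on $s_t$, then \cref{lem:policy_pme_equivalence} at each stage, and finally sum over $t$. Your additional care goes beyond what the paper's proof states explicitly: conditioning on the full pre-sampling history, invoking the policy-independence of the active-agent process, and checking measurability and integrability.
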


\begin{proof}
By the definition of the finite-horizon objective,
$J(\pi)=\mathbb E_{\tau\sim p^\pi}\left[\sum_{t=1}^{T}
F_t(A_t;s_t)\right]$,
where the trajectory distribution $p^\pi$ is induced by the policy
sequence $\pi$ and the transition kernels $\{P_t\}_{t=1}^{T}$. By
linearity of expectation,
$J(\pi)=\sum_{t=1}^{T}\mathbb E_{\tau\sim p^\pi}\left[
F_t(A_t;s_t)\right]$.
For each time step $t$, applying the tower property of conditional
expectation with respect to the state $s_t$ gives
$\mathbb E_{\tau\sim p^\pi}
\left[
F_t(A_t;s_t)
\right]
=
\mathbb E_{\tau\sim p^\pi}
\left[
\mathbb E_{A_t\sim\pi_t}
\left[
F_t(A_t;s_t)
\mid s_t
\right]
\right]$.
Conditional on $s_t$, the local observations
$\{o_{i,t}\}_{i\in\mathcal N_t}$ are fixed, and the action set $A_t$ is
sampled from the factorized categorical policy $\pi_t$. Therefore, by
\cref{lem:policy_pme_equivalence},
$\mathbb E_{A_t\sim\pi_t}
\left[
F_t(A_t;s_t)
\mid s_t
\right]
=\tilde f_t\bigl(x_t^\pi(s_t);s_t\bigr)$.
Substituting this identity into the preceding display yields
$\mathbb E_{\tau\sim p^\pi}
\left[
F_t(A_t;s_t)
\right]
=
\mathbb E_{\tau\sim p^\pi}
\left[
\tilde f_t\bigl(x_t^\pi(s_t);s_t\bigr)
\right]$.
Summing over $t=1,\ldots,T$ gives
\begin{equation*}
J(\pi)
=
\mathbb E_{\tau\sim p^\pi}
\left[
\sum_{t=1}^{T}
\tilde f_t\bigl(x_t^\pi(s_t);s_t\bigr)
\right],
\end{equation*}
which proves \eqref{eq:pme_policy_objective}.
\end{proof}


\section{Submodular Multi-Agent Policy Learning}
\label{sec:algorithm}

This section describes our policy-learning method, Submodular Multi-Agent Policy Learning (SubMAPL). By \cref{cor:cumulative_equivalence}, policy learning for \cref{prob:main} can be carried out through the PME objectives evaluated along the induced state trajectory. The main challenge is to construct tractable first-order feedback and update the local categorical policies without violating their simplex constraints.

Mirror ascent is a standard first-order method for constrained optimization that replaces Euclidean projection with a Bregman-divergence regularization \cite{beck2003mirror,hazan2016introduction}. For categorical distributions, the KL divergence is particularly suitable because the resulting mirror step admits a multiplicative update that preserves nonnegativity and normalization without Euclidean projection
\cite{shahrampour2017distributed,zhang2025near}. Motivated by this geometry, and in contrast to the Euclidean projected-gradient and policy-gradient framework studied in our preliminary work
\cite{liu2026submodular}, SubMAPL uses this PME representation to construct local marginal-gain feedback during training and updates local categorical policies through a KL-mirror step.

SubMAPL follows the centralized training-decentralized execution (CTDE) paradigm \cite{lowe2017multi,foerster2018coma}. In the training phase, the stage utility $F_t(\cdot\,;s_t)$ is evaluated on feasible sets that differ only in one agent's action to construct the stochastic local gradient estimator used in the KL-mirror update. In execution, each active agent applies its learned local policy using observations $o_{i,t}$.  

\subsection{Tabular Categorical Policies}
\label{sec:tabular_policies}

For each active agent $i\in\mathcal N_t$, SubMAPL uses a tabular softmax parameterization of the local categorical policy. Here, ``tabular'' means that a separate trainable logit is maintained for each local observation-action entry~\cite{sutton1999policy,peters2008natural,agarwal2021theory}. Let $\theta^i(o,a)\in\mathbb R$ denote the logit of agent $i$ for observation $o \in\mathcal O_i$ and action $a\in\mathcal A_{i}$. Given the local observation $o_{i,t}$, the policy is
\begin{equation}
\label{eq:tabular_softmax_main}
 \pi^i(a\mid o_{i,t})
 =
 \frac{
 \exp\bigl(\theta^i(o_{i,t},a)\bigr)
 }{
 \sum_{b\in\mathcal A_{i}}
 \exp\bigl(\theta^i(o_{i,t},b)\bigr)
 },
 \ a\in\mathcal A_{i}.
\end{equation}
The logits $\{\theta^i(o,a)\}_{o\in\mathcal O_i,\,
a\in\mathcal A_{i}}$ parameterize the tabular policy $\pi^i$. Equation~\eqref{eq:tabular_softmax_main} evaluates this policy at the current observation $o_{i,t}$. The induced marginal vector satisfies
$x_{(i,a),t}^{\pi}(s_t) = \pi^i\bigl(a\mid o_{i,t}\bigr), \ (i,a)\in\Omega_t$.

In open systems, the active-agent set $\mathcal N_t$ may vary over time. The logits of remaining agents are inherited across consecutive open-system domains. Each arriving agent is initialized using finite logits. Departing agents are excluded from the active policy domain. Hence, every feasible action of each active agent has strictly positive probability under the local softmax policy.

\subsection{Marginal-Gain Feedback and KL-Mirror Update}

At time $t$, condition on the current state $s_t$. For each active agent $i\in\mathcal N_t$ and $a\in\mathcal A_{i}$, define the local PME gradient coordinate by
\begin{equation}
\label{eq:true_pme_gradient}
g_{i,a,t}(\pi_t;s_t)
=\frac{\partial\tilde f_t}{\partial x_{(i,a)}}
(x_t^{\pi}(s_t); s_t),\ (i,a)\in\Omega_t.
\end{equation}
By \cref{lem:pme_coordinate_gradient}, this coordinate admits the marginal-gain representation
$g_{i,a,t}(\pi_t;s_t) = \mathbb E\left[F_t((i,a)\mid A^{-i};s_t)
\right]$,
where the actions in $A^{-i}$ are sampled independently according to
$\pi^j(\cdot\mid o_{j,t})$ for $j\in\mathcal N_t\setminus\{i\}$.

After the joint action
$A_t=\{(i,a_{i,t}):i\in\mathcal N_t\}$ is sampled, let
$A_t^{-i}=A_t\setminus\{(i,a_{i,t})\}$. SubMAPL constructs the local
marginal-gain estimator
\begin{equation}
\widehat g_{i,a,t} = F_t((i,a)\mid A_t^{-i};s_t),
\ a\in\mathcal A_{i}.
\label{eq:local_gradient_estimator}
\end{equation}

The vector $\widehat g_{i,\cdot,t}$ evaluates every feasible action of agent $i$ under the same sampled actions of the other agents. Conditional on $\pi_t$ and $s_t$, the actions in $A_t^{-i}$ are sampled independently according to $\pi^j(\cdot\mid o_{j,t})$ for
$j\in\mathcal N_t\setminus\{i\}$. Therefore,
\begin{equation}
\label{eq:unbiased_local_gradient_estimator}
\mathbb E
\left[
\widehat g_{i,a,t}
\mid \pi_t,s_t
\right]
=g_{i,a,t}(\pi_t;s_t) =\frac{\partial \tilde f_t}
{\partial x_{(i,a)}}(x_t^{\pi}(s_t);s_t).
\end{equation}
Thus, $\widehat g_{i,\cdot,t}$ is an unbiased stochastic estimator of the local PME gradient evaluated at the current policy.

The estimator in \eqref{eq:local_gradient_estimator} provides first-order information with respect to the local action probabilities. A direct additive probability update would require projection onto the
probability simplex~\cite{beck2003mirror,hazan2016introduction}. SubMAPL instead applies KL-regularized mirror ascent, which preserves
the simplex constraint without Euclidean projection and can be implemented through tabular softmax logits.

Let $\pi^{i,+}$ denote the tabular policy of agent $i$ after the KL-mirror update at time $t$ and before open-system migration. Given the current local policy
$\pi^i(\cdot\mid o_{i,t})$ and the stochastic local gradient estimator $\widehat g_{i,\cdot,t}$, the local policy distribution at the current observation is updated as \cite{beck2003mirror}
\begin{equation}
\label{eq:kl_mirror_optimization}
\pi^{i,+}(\cdot\mid o_{i,t})
\in
\arg\max_{p\in\mathcal F_{i}}
\left\{
\eta\langle\widehat g_{i,\cdot,t},p\rangle
-
D_{\rm KL}
\bigl(
p\Vert\pi^i(\cdot\mid o_{i,t})
\bigr)
\right\},
\end{equation}
where $\eta>0$ is the step size. For two local categorical distributions
$p,q\in\mathcal F_{i}$, define
\begin{equation}
\label{eq:local_kl_divergence}
D_{\rm KL}(p\Vert q)
=
\sum_{a\in\mathcal A_{i}}
p_a
\log
\frac{p_a}{q_a},
\end{equation}
with the convention $0\log(0/q_a)=0$.
For two tabular policies $\pi$ and $\pi'$ defined on the current
open-system domain, define
\begin{equation}
\label{eq:tabular_policy_kl}
D_{{\rm KL},t}(\pi\Vert\pi')
=
\sum_{i\in\mathcal N_t}
\sum_{o\in\mathcal O_i}
D_{\rm KL}
\left(
\pi^i(\cdot\mid o)
\Vert
\pi'^i(\cdot\mid o)
\right).
\end{equation}

The first-order optimality condition yields
\begin{equation}
\label{eq:kl_mirror_closed_form}
\pi^{i,+}(a\mid o_{i,t})
=\frac{\pi^i(a\mid o_{i,t})
\exp(\eta\widehat g_{i,a,t})
}{
\sum_{b\in\mathcal A_{i}}
\pi^i(b\mid o_{i,t})
\exp(\eta\widehat g_{i,b,t})
},
\
a\in\mathcal A_{i}.
\end{equation}
For all other observations, the local policy distributions remain unchanged:
\begin{equation}
\label{eq:unchanged_policy_distributions}
\pi^{i,+}(\cdot\mid o)
= \pi^i(\cdot\mid o), \
o\in\mathcal O_i\setminus\{o_{i,t}\}.
\end{equation}
Hence, $\pi^{i,+}(\cdot\mid o_{i,t})\in\mathcal F_{i}$.
Since the KL-mirror step changes only the row associated with
$o_{i,t}$, all other terms in the tabular-policy KL divergence remain
unchanged.

Under the tabular softmax parameterization
\eqref{eq:tabular_softmax_main}, \eqref{eq:kl_mirror_closed_form} is implemented by the additive logit update
\begin{equation}
\label{eq:tabular_logit_update}
\theta^{i,+}(o_{i,t},a)
=
\theta^i(o_{i,t},a)
+
\eta\widehat g_{i,a,t},
\
a\in\mathcal A_{i},
\end{equation}
where all other logit entries remain unchanged:
\begin{equation}
\label{eq:unchanged_logit_entries}
\theta^{i,+}(o,a)
=
\theta^i(o,a),
\
o\in\mathcal O_i\setminus\{o_{i,t}\},
\
a\in\mathcal A_{i}.
\end{equation}
Substituting \eqref{eq:tabular_logit_update} into the softmax
parameterization yields
\begin{equation}\label{eq:logit_mirror_equivalence}
\begin{aligned}
\pi^{i,+}(a\mid o_{i,t})
&=\frac{\exp \left(\theta^i(o_{i,t},a)+\eta\widehat g_{i,a,t}
\right)}{
\displaystyle \sum_{b\in\mathcal A_i}
\exp \left(\theta^i(o_{i,t},b)+\eta\widehat g_{i,b,t}\right)}
\\
&=\frac{\exp \left(\theta^i(o_{i,t},a) \right)\exp (\eta\widehat g_{i,a,t})}{\displaystyle \sum_{b\in\mathcal A_i}
\left( \exp \left(\theta^i(o_{i,t},b)\right)\exp (\eta\widehat g_{i,b,t}) \right)}
\\
&\stackrel{(i)}{=}\frac{\pi^i(a\mid o_{i,t})\exp(\eta\widehat g_{i,a,t})
}{\displaystyle \sum_{b\in\mathcal A_i}\pi^i(b\mid o_{i,t})
\exp(\eta\widehat g_{i,b,t})}, 
\ a\in\mathcal A_i,
\end{aligned}
\end{equation}
where $(i)$ follows by dividing the numerator and denominator by 
$\sum_{b\in\mathcal A_i}\exp(\theta^i(o_{i,t},b))$.
Thus, the additive logit update exactly implements the KL-mirror update in \eqref{eq:kl_mirror_closed_form}.
The post-update active joint policy before migration is defined by
\begin{equation}
\label{eq:post_update_joint_policy}
\pi_t^+(\mathbf a_t\mid\mathbf o_t)
=
\prod_{i\in\mathcal N_t}
\pi^{i,+}(a_i\mid o_{i,t}).
\end{equation}
Therefore, SubMAPL performs a local full-action KL-mirror step on the current local observation of each active agent's tabular policy and implements this policy-space update through an additive logit update.  

\subsection{Open Policy Migration}
\label{sec:policy_migration}
The transition from time step $t$ to $t+1$ involves the remaining agents $\mathcal R_{t+1}$, the arriving agents
$\mathcal N_{t+1}^{\rm in}$, and the departing agents
$\mathcal N_{t+1}^{\rm out}$. After the KL-mirror update at time $t$, let $\pi_t^+$ denote the post-update active joint policy on the current agent domain $\mathcal N_t$. Open policy migration retains the updated local policies of the remaining agents, assigns policies to the arriving agents, and excludes the departing agents from the next active
policy domain.

\begin{definition}[Open Policy Migration]
\label{def:migration}
Given the post-update active joint policy $\pi_t^+$, the open policy migration map $\mathcal M_t$ constructs
$\pi_{t+1}=\mathcal M_t(\pi_t^+)$
according to
\begin{equation}
\label{eq:policy_migration_agentwise}
(\pi_{t+1})^i(\cdot\mid o)
=
\begin{cases}
\pi^{i,+}(\cdot\mid o),
& i\in\mathcal R_{t+1},\\[2mm]
\pi^{i,0}(\cdot\mid o),
& i\in\mathcal N_{t+1}^{\rm in},
\end{cases}
\end{equation}
where $\pi^{i,0}$ is the categorical policy assigned to an arriving agent $i$. It is represented by finite logits
$\theta^{i,0} \in \mathbb R^{|\mathcal O_i|\times|\mathcal A_i|}$, so that $\pi^{i,0}(a\mid o)>0,
o\in\mathcal O_i, a\in\mathcal A_i$.
Agents in $\mathcal N_{t+1}^{\rm out}$ do not appear in the next active joint-policy domain.
\end{definition}

\begin{algorithm}[t]
\small
\caption{Submodular Multi-Agent Policy Learning (SubMAPL)}
\label{alg:submapl}
\begin{algorithmic}[1]
\State {\bfseries Input:} horizon $T$, step size $\eta$, initial active parameter collection
$\Theta_1=\{\theta^{i,0}:i\in\mathcal N_1\}$, finite-logit migration rule for arriving agents, and
stage-utility evaluators $\{F_t\}_{t=1}^T$
\For{$t=1,\ldots,T$}
    \For{each active agent $i\in\mathcal N_t$}
        \State Receive the local observation $o_{i,t}$
        \State Form $\pi^i(\cdot\mid o_{i,t})$ using
        \eqref{eq:tabular_softmax_main}
        \State Sample
        $a_{i,t}\sim\pi^i(\cdot\mid o_{i,t})$
    \EndFor
    \State Form
    $A_t=\{(i,a_{i,t}):i\in\mathcal N_t\}$
    \For{each active agent $i\in\mathcal N_t$}
        \State Set
        $A_t^{-i}=A_t\setminus\{(i,a_{i,t})\}$
        \For{each feasible action $a\in\mathcal A_{i}$}
            \State Evaluate
            $\widehat g_{i,a,t}$ using
            \eqref{eq:local_gradient_estimator} at the pre-transition state $s_t$
           \State Update $\theta^i(o_{i,t}, a) \leftarrow \theta^i(o_{i,t}, a) + \eta \widehat g_{i,a,t}$   
        \EndFor
    \EndFor
    \State Execute $A_t$
    \State Sample the next state
        $s_{t+1}\sim
        P_t(\cdot\mid s_t,\mathbf a_t)$
    \If{$t<T$}   
        \State Observe the next active-agent set
        $\mathcal N_{t+1}$
        \State Apply the open policy migration in \cref{def:migration}
    \EndIf
\EndFor
\State {\bfseries Output:} the online policy sequence
$\{\pi_t\}_{t=1}^T$ and the final post-update policy
$\pi_T^+$
\end{algorithmic}
\end{algorithm}

The complete training procedure is summarized in \cref{alg:submapl}. At each time step, SubMAPL evaluates the marginal gain of every feasible action of each active agent while keeping the sampled actions of the other agents fixed. The learned policy remains decentralized at execution time, because each active agent samples from its own categorical policy using only its local observation.


\section{Performance Analysis}
\label{sec:theory}

This section establishes performance guarantees for SubMAPL as a PME-based KL-mirror policy-learning method in OMAS. The analysis is conducted along the state sequence generated by \cref{alg:submapl}. At each time step, the current state and open-system domain define a conditional stagewise PME objective over the current categorical-policy face. We first collect the basic gradient and mirror-ascent inequalities used in the analysis, then derive a bound on the finite-horizon objective in \cref{prob:main} through regret analysis, and finally specialize the general bound to a tighter one when the finite-horizon objective $J(\pi)$ is itself submodular.

\subsection{Analysis of KL-Mirror and Stagewise Bound}

\begin{assumption}[Bounded Marginal Gains~\cite{zhang2019online}]
\label{ass:bounded_marginal_gain}
There exists a constant $B<\infty$ such that, for every $t\in[T]$ and state $s_t$,
\begin{equation}
\label{eq:bounded_marginal_gain}
  0 \le F_t(e\mid A;s_t) \le B, \ \forall A\subseteq\Omega_t, \ e\in\Omega_t\setminus A.
\end{equation}
\end{assumption}

Let $\widehat g_t=(\widehat g_{i,a,t})_{(i,a)\in\Omega_t}
\in\mathbb R^{|\Omega_t|}$ denote the stochastic marginal-gain vector used by SubMAPL, where $\widehat g_{i,a,t}$ is defined in
\eqref{eq:local_gradient_estimator}. 
Let $\mathcal H_t$ denote the information available before action sampling at time $t$, including the past trajectory, the current state $s_t$, the current open-system domain $(\mathcal N_t,\{\mathcal A_{i}\}_{i\in\mathcal N_t})$, and the current tabular policy $\pi_t$. Since $x_t=x_t^\pi(s_t)$, conditional on $\mathcal H_t$, the quantities $\pi_t$, $x_t$, $s_t$, and the current open-system domain are fixed. By \eqref{eq:unbiased_local_gradient_estimator}, we have
\begin{equation}
\label{eq:submapl_coordinate_unbiased_pme_gradient}
\mathbb E
\left[
\widehat g_{i,a,t}
\mid
\mathcal H_t
\right]
=
\frac{\partial \tilde f_t}{\partial x_{(i,a)}}(x_t;s_t),
\
(i,a)\in\Omega_t.
\end{equation}
Equivalently,
\begin{equation}
\label{eq:submapl_unbiased_pme_gradient}
\mathbb E
\left[
\widehat g_t
\mid
\mathcal H_t
\right]
=
\nabla \tilde f_t(x_t;s_t).
\end{equation}

\cref{ass:bounded_marginal_gain} also gives a uniform bound on the stochastic feedback. Indeed, for every $(i,a)\in\Omega_t$, we
have $(i,a)\notin A_t^{-i}$. Hence \cref{ass:bounded_marginal_gain}
can be applied with $e=(i,a)$ and $A=A_t^{-i}$, which yields $ 0 \le \widehat g_{i,a,t} = F_t((i,a)\mid A_t^{-i};s_t)  \le B, \ (i,a)\in\Omega_t$ .
Consequently, for each active agent $i\in\mathcal N_t$, $\max_{a\in\mathcal A_{i}} |\widehat g_{i,a,t}| \le B $.

For any vector $g\in\mathbb R^{|\Omega_t|}$ indexed by the current
agent-action ground set $\Omega_t$, define the per-agent mixed norm by
\begin{equation}
\label{eq:agentwise_mixed_norm}
  \lVert g\rVert_{\infty,2}^{2}
  =
  \sum_{i\in\mathcal N_t}
  \left(
  \max_{a\in\mathcal A_{i}} |g_{i,a}|
  \right)^2 .
\end{equation}
The norm $\lVert\cdot\rVert_{\infty,2}$ is always understood with
respect to the current open-system domain
$(\mathcal N_t,\{\mathcal A_{i}\}_{i\in\mathcal N_t})$.
Then
\begin{align}
  \lVert \widehat g_t\rVert_{\infty,2}^{2}
  =
  \sum_{i\in\mathcal N_t}
  \left(
  \max_{a\in\mathcal A_{i}}
  |\widehat g_{i,a,t}|
  \right)^2 
  \le
  \sum_{i\in\mathcal N_t}
  B^2 = |\mathcal N_t|B^2 .
\label{eq:submapl_bounded_per_agent_norm}
\end{align}
Let $N_{\max}=\max_{t\in[T]}|\mathcal N_t|$. Then
\begin{equation}
\label{eq:submapl_uniform_gradient_bound}
  \lVert \widehat g_t\rVert_{\infty,2}^{2}
  \le
  N_{\max}B^2,
  \ t\in[T].
\end{equation}

\begin{lemma}[Auxiliary Inequalities]
\label{lem:pme_mirror_basic}
Consider a time step $t$, a state $s_t$, and the open-system domain at time $t$. Let $\tilde f_t(\cdot\,;s_t)$ be the PME of a normalized, monotone, submodular utility $F_t(\cdot\,;s_t)$. Then the following inequalities hold.
\begin{enumerate}
    \item \textbf{Categorical-face first-order bound \cite{bian2017guaranteed}:}
    For any $x,y\in\mathcal F_t$,
    \begin{equation}
    \label{eq:restricted_dr_ineq}
      \frac{1}{2}\tilde f_t(y;s_t)-\tilde f_t(x;s_t)
      \le
      \frac{1}{2}
      \left\langle
      \nabla \tilde f_t(x;s_t),y-x
      \right\rangle .
    \end{equation}

    \item \textbf{KL mirror-ascent inequality \cite{beck2003mirror}:}
    Suppose $x\in\mathcal F_t$ satisfies
    $x_{(i,a)}>0$ for all $i\in\mathcal N_t$ and
    $a\in\mathcal A_{i}$. Let $x^+\in\mathcal F_t$ be generated by
    \begin{equation}
    \label{eq:global_kl_mirror_step}
      x^+
     \in
      \arg\max_{z\in\mathcal F_t}
      \left\{
      \eta\langle g,z\rangle
      -
      D_{\rm KL}(z\Vert x)
      \right\},
    \end{equation}
    where $\eta>0$ and
    \[
      D_{\rm KL}(z\Vert x)
      =
      \sum_{i\in\mathcal N_t}
      \sum_{a\in\mathcal A_{i}}
      z_{(i,a)}
      \log\frac{z_{(i,a)}}{x_{(i,a)}} .
    \]
    Then, for any $u\in\mathcal F_t$,
    \begin{equation}
    \label{eq:kl_mirror_ineq}
      \langle g,u-x\rangle
      \le
      \frac{
      D_{\rm KL}(u\Vert x)
      -
      D_{\rm KL}(u\Vert x^+)
      }{\eta}
      +
      \frac{\eta}{2}
      \lVert g\rVert_{\infty,2}^{2}.
    \end{equation}
\end{enumerate}
\end{lemma}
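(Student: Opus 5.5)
For part 1, I will work along the segment between $x$ and the coordinatewise maximum $x\vee y$, using monotonicity and DR-submodularity from \cref{pro:pme_properties}; this is the standard argument of \cite{bian2017guaranteed}. The claim is equivalent to
\[
\langle\nabla\tilde f_t(x;s_t),y-x\rangle\ge\tilde f_t(y;s_t)-2\tilde f_t(x;s_t).
\]
Write $y-x=(y-x)_+-(x-y)_+$.

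For the positive part, consider $h(\lambda)=\tilde f_t(x+\lambda(y-x)_+;s_t)$. Since the gradient is coordinatewise nonincreasing along this increasing ray, $h$ is concave. Concavity gives $\langle\nabla\tilde f_t(x),(y-x)_+\rangle\ge\tilde f_t(x\vee y)-\tilde f_t(x)$, and monotonicity then yields $\tilde f_t(x\vee y)-\tilde f_t(x)\ge\tilde f_t(y)-\tilde f_t(x)$.

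For the negative part, consider the decreasing path $x-\lambda(x-y)_+$, which runs from $x$ to $x\wedge y$. Along this path the gradient is coordinatewise nondecreasing. Therefore
\[
\tilde f_t(x)-\tilde f_t(x\wedge y)\ge\langle\nabla\tilde f_t(x),(x-y)_+\rangle .
\]
Since $\tilde f_t(x\wedge y)\ge0$ by nonnegativity in \cref{pro:pme_properties}, this gives $\langle\nabla\tilde f_t(x),(x-y)_+\rangle\le\tilde f_t(x)$. Combining the two parts gives the displayed inequality.

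One technical point is domain membership. The points $x\vee y$ may leave $\mathcal P(\mathcal I_t)$, because row sums can exceed $1$. I will handle this by noting that $\tilde f_t$ is given by a multilinear polynomial (\eqref{eq:pme_def_sum} is a polynomial in $x$). Its first and second derivatives have the expectation formulas of \cref{lem:pme_coordinate_gradient} and \eqref{eq:cross_second_partial}, and those formulas extend as multilinear identities. I expect this to be the main obstacle. My first approach is to avoid $x\vee y$ altogether and argue per agent. By multilinearity with block-affine structure, the PME is affine in each block, and the same-agent second derivatives vanish by \eqref{eq:same_agent_second_partial}. On $\mathcal F_t$ the cross-agent Hessian entries are nonpositive, so $\tilde f_t$ is concave along any direction $d\ge0$ and convex along none. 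I would therefore use the path $x+\lambda(y-x)_+$ only up to the points where it remains in the box $[0,1]^{|\Omega_t|}$, namely at $x\vee y$. There the multilinear form is still defined and the sign arguments of \eqref{eq:cross_second_partial} still hold, because each factor remains a formal polynomial whose Hessian entries are expectations of submodular differences under signed but product weights. If this fails, I will switch to a telescoping agent-by-agent argument over $\mathcal F_{i}$.

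For part 2, I will use the standard three-point argument for mirror ascent on the product of simplices. Since $D_{\rm KL}$ separates across agents, $x^+$ is given blockwise by the multiplicative formula \eqref{eq:kl_mirror_closed_form}, and it is strictly positive. Its first-order optimality gives, for all $u\in\mathcal F_t$,
\[
\eta\langle g,u-x^+\rangle\le\langle\nabla\phi(x^+)-\nabla\phi(x),u-x^+\rangle ,
\]
where $\phi(z)=\sum z\log z$. The three-point identity turns the right-hand side into $D_{\rm KL}(u\Vert x)-D_{\rm KL}(u\Vert x^+)-D_{\rm KL}(x^+\Vert x)$.

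It then remains to bound $\eta\langle g,x^+-x\rangle-D_{\rm KL}(x^+\Vert x)$. I will do this blockwise. Pinsker's inequality gives $D_{\rm KL}(x_i^+\Vert x_i)\ge\frac12\|x_i^+-x_i\|_1^2$, and Hölder gives $\langle g_i,x_i^+-x_i\rangle\le\max_a|g_{i,a}|\,\|x_i^+-x_i\|_1$. Maximizing $\eta c\,r-\frac12r^2$ over $r$ gives at most $\eta^2c^2/2$. Summing over agents produces $\frac{\eta^2}{2}\|g\|_{\infty,2}^2$, and dividing by $\eta$ gives \eqref{eq:kl_mirror_ineq}.
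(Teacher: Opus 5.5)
\textbf{Part 2.} Your argument is the same as the paper's: the three-point inequality, then blockwise H\"older, maximizing $\eta c r-\tfrac12 r^2$, and Pinsker. It is fine.

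\textbf{Part 1: the positive-part step fails.} You correctly flag that $x\vee y\notin\mathcal P(\mathcal I_t)$. Your remedy, however, is false: you evaluate the polynomial \eqref{eq:pme_def_sum} on the box and claim the sign arguments survive ``under signed but product weights.'' Along $z(\lambda)=x+\lambda(y-x)_+$ with $x,y\in\mathcal F_t$, every block with $x_i\neq y_i$ has row sum larger than $1$. The no-selection factor $1-\sum_a z_{(i,a)}$ in \eqref{eq:agent_prob} is then negative. An average of nonpositive submodular differences, or of nonnegative marginal gains, taken against signed weights has no definite sign.

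Here is a concrete counterexample. Take $m$ agents with two actions each, and let all $2m$ pairs cover one common target of weight $1$. Then $F_t(A)=\mathbf 1\{A\neq\emptyset\}$, which is normalized, monotone and submodular. Let $x$ put every agent on action $1$ and $y$ put every agent on action $2$. Each block of $z(\lambda)$ has weights $1,\lambda,-\lambda$, so $h(\lambda)=\tilde f_t(z(\lambda);s_t)=1-(-\lambda)^m$ and $h'(0)=\langle\nabla\tilde f_t(x;s_t),(y-x)_+\rangle=0$.
\begin{itemize}
\item For $m=2$, $\tilde f_t(x\vee y;s_t)=0<1=\tilde f_t(y;s_t)$, so your monotonicity step fails.
\item For $m=3$, $h(\lambda)=1+\lambda^3$ is strictly convex on $(0,1]$. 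Your concavity step would assert $0\ge\tilde f_t(x\vee y;s_t)-\tilde f_t(x;s_t)=1$.
\end{itemize}
Your fallback (``agent-by-agent telescoping over $\mathcal F_i$'') is not spelled out, and the obvious hybrid telescoping does not close by itself. The hybrid points are not comparable to $x$, and $y_i-x_i$ has mixed signs.

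\textbf{The missing idea} is to take the join at the level of sets rather than marginals. $F_t(\cdot;s_t)$ is monotone submodular on all of $2^{\Omega_t}$, not only on $\mathcal I_t$. Following the paper, sample $A=\{e_i\}\sim\mathcal D_t(\cdot\mid x)$ and $Y=\{u_i\}\sim\mathcal D_t(\cdot\mid y)$ independently, and set $A^{-i}=A\setminus\{e_i\}$. Telescoping and submodularity give $F_t(Y)-F_t(A)\le F_t(A\cup Y)-F_t(A)\le\sum_i F_t(u_i\mid A^{-i})$, and likewise $F_t(A)\ge\sum_i F_t(e_i\mid A^{-i})$. Take expectations, using that $u_i$ and $e_i$ are each independent of $A^{-i}$. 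Then \cref{lem:pme_coordinate_gradient} turns these into $\tilde f_t(y)-\tilde f_t(x)\le\langle\nabla\tilde f_t(x),y\rangle$ and $\langle\nabla\tilde f_t(x),x\rangle\le\tilde f_t(x)$, which add up to the claim.

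Your negative-part argument is valid as written, since the segment to $x\wedge y$ stays in $\mathcal P(\mathcal I_t)$. If you want to keep your $(y-x)_+$/$(x-y)_+$ split, couple $e_i$ and $u_i$ maximally for each agent, independently across agents. The same telescoping then gives $\tilde f_t(y)-\tilde f_t(x)\le\langle\nabla\tilde f_t(x),(y-x)_+\rangle$ directly.
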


\begin{proof}
(1) Let $A\sim\mathcal D_t(\cdot\mid x)$ and $Y\sim\mathcal D_t(\cdot\mid y)$ be sampled independently. Since $x,y\in\mathcal F_t$, both $A$ and $Y$ select
exactly one agent-action pair for each active agent. Define
$A=\{e_i:i\in\mathcal N_t\}$ and
$Y=\{u_i:i\in\mathcal N_t\}$, where
$e_i,u_i\in\Omega_{i,t}$. Let $A^{-i}=A\setminus\{e_i\}$.
By monotonicity,
  $F_t(Y;s_t)-F_t(A;s_t)
  \le
  F_t(A\cup Y;s_t)-F_t(A;s_t)$.

Fix an arbitrary ordering $\{i_1,\ldots,i_m\}$ of $\mathcal N_t$,
where $m=|\mathcal N_t|$. By telescoping, $F_t(A\cup Y;s_t)-F_t(A;s_t)  = \sum_{\ell=1}^{m}
  F_t\left(
  u_{i_\ell}
  \mid
  A\cup\{u_{i_1},\ldots,u_{i_{\ell-1}}\};s_t
  \right)$.
Since
  $A^{-i_\ell}
  \subseteq
  A\cup\{u_{i_1},\ldots,u_{i_{\ell-1}}\}$,
submodularity implies
  $F_t\left(
  u_{i_\ell}
  \mid
  A\cup\{u_{i_1},\ldots,u_{i_{\ell-1}}\};s_t
  \right)
  \le
  F_t(u_{i_\ell}\mid A^{-i_\ell};s_t)$.
Therefore,
\begin{equation}
\label{eq:y_gap_upper}
  F_t(Y;s_t)-F_t(A;s_t)
  \le
  \sum_{i\in\mathcal N_t}
  F_t(u_i\mid A^{-i};s_t).
\end{equation}

Next, by normalization and telescoping along the same ordering,
 $ F_t(A;s_t)
  =
  \sum_{\ell=1}^{m}
  F_t\left(
  e_{i_\ell}
  \mid
  \{e_{i_1},\ldots,e_{i_{\ell-1}}\};s_t
  \right)$.
Since
  $\{e_{i_1},\ldots,e_{i_{\ell-1}}\}
  \subseteq
  A^{-i_\ell}$,
submodularity gives
  $F_t\left(
  e_{i_\ell}
  \mid
  \{e_{i_1},\ldots,e_{i_{\ell-1}}\};s_t
  \right)
  \ge
  F_t(e_{i_\ell}\mid A^{-i_\ell};s_t)$.
Hence,
\begin{equation}
\label{eq:a_marginal_lower}
  \sum_{i\in\mathcal N_t}
  F_t(e_i\mid A^{-i};s_t)
  \le
  F_t(A;s_t).
\end{equation}
Combining \eqref{eq:y_gap_upper} and \eqref{eq:a_marginal_lower} yields
\begin{equation}
\label{eq:deterministic_dr_gap}
\begin{aligned}
  & F_t(Y;s_t)-2F_t(A;s_t)\\
  &\le
  \sum_{i\in\mathcal N_t}
  \Big[
  F_t(u_i\mid A^{-i};s_t)
  -
  F_t(e_i\mid A^{-i};s_t)
  \Big].
\end{aligned}
\end{equation}

Taking expectations in \eqref{eq:deterministic_dr_gap} over the independent samples $A\sim\mathcal D_t(\cdot\mid x)$ and $Y\sim\mathcal D_t(\cdot\mid y)$ gives
\[
\begin{aligned}
  & \tilde f_t(y;s_t)-2\tilde f_t(x;s_t)\\
  &\le
  \sum_{i\in\mathcal N_t}
  \sum_{a\in\mathcal A_{i}}
  y_{(i,a)}
  \mathbb E_{A^{-i}\sim\mathcal D_t^{-i}(\cdot\mid x)}
  \left[
  F_t((i,a)\mid A^{-i};s_t)
  \right] \\
  &\quad -
  \sum_{i\in\mathcal N_t}
  \sum_{a\in\mathcal A_{i}}
  x_{(i,a)}
  \mathbb E_{A^{-i}\sim\mathcal D_t^{-i}(\cdot\mid x)}
  \left[
  F_t((i,a)\mid A^{-i};s_t)
  \right].
\end{aligned}
\]
Here, under $\mathcal D_t(\cdot\mid x)$, the selected action of agent
$i$ is independent of $A^{-i}$ and has marginal distribution
$(x_{(i,a)})_{a\in\mathcal A_{i}}$; similarly, under $\mathcal D_t(\cdot\mid y)$, the selected action of agent $i$ has marginal distribution
$(y_{(i,a)})_{a\in\mathcal A_{i}}$ and is independent of $A$.
By \cref{lem:pme_coordinate_gradient}, the expected marginal gain
is the corresponding PME coordinate derivative. Therefore,
\[
  \tilde f_t(y;s_t)-2\tilde f_t(x;s_t)
  \le
  \left\langle
  \nabla \tilde f_t(x;s_t),y-x
  \right\rangle .
\]
Dividing by $2$ proves \eqref{eq:restricted_dr_ineq}.

(2) The optimality condition of \eqref{eq:global_kl_mirror_step} gives
the three-point inequality
\begin{equation}
\label{eq:three_point_kl}
  \eta\langle g,u-x^+\rangle
  \le
  D_{\rm KL}(u\Vert x)
  -
  D_{\rm KL}(u\Vert x^+)
  -
  D_{\rm KL}(x^+\Vert x).
\end{equation}
For each active agent $i\in\mathcal N_t$, H\"older's inequality gives \cite{beck2003mirror,hazan2016introduction}
\begin{equation}
\begin{aligned}
  \eta
 & \sum_{a\in\mathcal A_{i}}
  g_{i,a}
  \left(x^+_{(i,a)}-x_{(i,a)}\right)\\
  &\le
  \eta
  \left(
  \max_{a\in\mathcal A_{i}} |g_{i,a}|
  \right)
  \sum_{a\in\mathcal A_{i}}
  \left|x^+_{(i,a)}-x_{(i,a)}\right|.
\end{aligned}
\end{equation}
By Young's inequality,
\[
\begin{aligned}
  &\eta
  \left(
  \max_{a\in\mathcal A_{i}} |g_{i,a}|
  \right)
  \sum_{a\in\mathcal A_{i}}
  \left|x^+_{(i,a)}-x_{(i,a)}\right| \\
  &\le
  \frac{\eta^2}{2}
  \left(
  \max_{a\in\mathcal A_{i}} |g_{i,a}|
  \right)^2
  +
  \frac{1}{2}
  \left(
  \sum_{a\in\mathcal A_{i}}
  \left|x^+_{(i,a)}-x_{(i,a)}\right|
  \right)^2 .
\end{aligned}
\]
Pinsker's inequality, applied to the two categorical distributions over
$\mathcal A_{i}$, gives
\[
  \frac{1}{2}
  \left(
  \sum_{a\in\mathcal A_{i}}
  \left|x^+_{(i,a)}-x_{(i,a)}\right|
  \right)^2
  \le
  \sum_{a\in\mathcal A_{i}}
  x^+_{(i,a)}
  \log
  \frac{x^+_{(i,a)}}{x_{(i,a)}} .
\]
Thus,
\[
\begin{aligned}
  &\eta
  \sum_{a\in\mathcal A_{i}}
  g_{i,a}
  \left(x^+_{(i,a)}-x_{(i,a)}\right) \\
  &\le
  \frac{\eta^2}{2}
  \left(
  \max_{a\in\mathcal A_{i}} |g_{i,a}|
  \right)^2
  +
  \sum_{a\in\mathcal A_{i}}
  x^+_{(i,a)}
  \log
  \frac{x^+_{(i,a)}}{x_{(i,a)}} .
\end{aligned}
\]
Summing over $i\in\mathcal N_t$ yields
\begin{equation}
\label{eq:mirror_movement_bound}
  \eta\langle g,x^+-x\rangle
  \le
  D_{\rm KL}(x^+\Vert x)
  +
  \frac{\eta^2}{2}
  \lVert g\rVert_{\infty,2}^{2}.
\end{equation}
Combining \eqref{eq:three_point_kl} and
\eqref{eq:mirror_movement_bound}, we obtain
\[
\begin{aligned}
  \eta\langle g,u-x\rangle
  &=
  \eta\langle g,u-x^+\rangle
  +
  \eta\langle g,x^+-x\rangle \\
  &\le
  D_{\rm KL}(u\Vert x)
  -
  D_{\rm KL}(u\Vert x^+)
  +
  \frac{\eta^2}{2}
  \lVert g\rVert_{\infty,2}^{2}.
\end{aligned}
\]
Dividing by $\eta$ proves \eqref{eq:kl_mirror_ineq}.
\end{proof}

\subsection{Suboptimality Gap in the General Case}
\label{sec:regret}

Dynamic approximation regret compares the cumulative utility attained by an algorithm with that of a sequence of stagewise optimal allocations~\cite{zhang2025near,zhang2025effective}. Existing formulations for multi-agent online submodular coordination consider a fixed agent-action domain. However, the consecutive comparator policies in OMAS may belong to different policy domains as agents arrive and depart. We introduce an open-system variant that measures comparator variation after mapping the previous comparator to the new domain through policy migration.

Given the current state $s_t$ and open-system domain, define the stagewise discrete optimum $F_t^\star(s_t)=\max_{A\in\mathcal I_t}F_t(A;s_t)$.
Let $x_t^\star \in \arg\max_{x\in\mathcal F_t} \tilde f_t(x;s_t)$.
By \cref{lem:pme_exactness}, $\tilde f_t(x_t^\star;s_t)=F_t^\star(s_t)$.
The sequence $\{x_t^\star\}_{t=1}^T$ may vary with the state and the active-agent domain.
Define the tabular policy $\pi_t^\star$ associated with $x_t^\star$ by
\begin{equation}
\label{eq:stagewise_optimal_policy}
(\pi_t^{\star})^i(a\mid o)
=
\begin{cases}
x_{(i,a),t}^\star,
& o=o_{i,t},\\
\pi^i(a\mid o),
& o\neq o_{i,t},
\end{cases}
\
i\in\mathcal N_t,\
o\in\mathcal O_i,\
a\in\mathcal A_{i}.
\end{equation}
By \eqref{eq:tabular_policy_kl},
$D_{{\rm KL},t}(\pi_t^\star\Vert\pi_t)=D_{\rm KL}(x_t^\star\Vert x_t)$.

The transition from time step $t$ to $t+1$ consists of remaining agents $\mathcal R_{t+1}$, arriving agents $\mathcal N_{t+1}^{\rm in}$, and departing agents $\mathcal N_{t+1}^{\rm out}$. SubMAPL first obtains the post-update tabular policy $\pi_t^+$ through the KL-mirror update in \eqref{eq:kl_mirror_closed_form}-\eqref{eq:unchanged_policy_distributions}. It then applies the open policy migration in \cref{def:migration}: $\pi_{t+1}=\mathcal M_t(\pi_t^+)$.
Let $\widehat{\pi}=\{\pi_t\}_{t=1}^T$ denote the online policy sequence generated by SubMAPL, and let $p^{\widehat{\pi}}$ denote the trajectory distribution induced by $\widehat{\pi}$ and the transition kernels $\{P_t\}_{t=1}^T$. Following KL-potential tracking analyses on a fixed domain \cite{hall2013dynamical, shahrampour2017distributed}, we define the open-system KL tracking variation as
\begin{equation}
\label{eq:open_kl_variation}
\begin{aligned}
\mathcal V_T^{\rm open}
=
\mathbb E_{\tau\sim p^{\widehat{\pi}}}
\Bigg[
\sum_{t=1}^{T-1}
\Big(
& D_{{\rm KL},t+1}(
\pi_{t+1}^\star\Vert\pi_{t+1})  \\
& -
D_{{\rm KL},t}(
\pi_t^\star\Vert\pi_t^+)
\Big)_+
\Bigg].
\end{aligned}
\end{equation}
This variation records the increase in the KL tracking potential between consecutive time steps while allowing both the stagewise benchmark and the open-system policy domain to change. In particular, it incorporates agent arrivals and departures through $\mathcal R_{t+1}$, $\mathcal N_{t+1}^{\rm in}$, and $\mathcal N_{t+1}^{\rm out}$. When the active-agent set is fixed, $\mathcal M_t$ is the identity map, and $\mathcal V_T^{\rm open}$ reduces to the corresponding KL tracking variation on a fixed tabular-policy domain.

\begin{definition}[Open-System Dynamic Approximation-Regret]
\label{def:open_regret}
The open-system dynamic $1/2$-approximation regret is defined as
\begin{equation}
\label{eq:open_dynamic_alpha_regret}
\mathrm{Reg}_T^{1/2}
=
\sum_{t=1}^{T}
\mathbb E_{\tau\sim p^{\widehat{\pi}}}
\left[
\frac{1}{2}F_t^\star(s_t)
-
\tilde f_t(x_t;s_t)
\right].
\end{equation}
By \cref{lem:policy_pme_equivalence},
$\tilde f_t(x_t;s_t)$ is the conditional expected stage utility of
the factorized categorical policy $\pi_t$. Thus,
$\mathrm{Reg}_T^{1/2}$ measures its cumulative approximation gap
relative to the time-varying stagewise optimal values.
\end{definition}

\begin{lemma}[Open-System KL-Mirror Approximation-Regret Bound]
\label{lem:open_kl_regret} 
Let $\{\pi_t\}_{t=1}^T$ be the policy sequence generated by \cref{alg:submapl}. Under
\cref{ass:standing_conditions} and \cref{ass:bounded_marginal_gain}, for any constant step size
$\eta>0$,
\begin{equation}
\label{eq:open_kl_regret_bound}
\mathrm{Reg}_T^{1/2}
\le
\frac{\mathbb E_{s_1\sim\mu_1}
\left[
D_{{\rm KL},1}
\left(
\pi_1^\star\Vert\pi_1
\right)
\right]
+
\mathcal V_T^{\rm open}
}{2\eta}
+
\frac{\eta T N_{\max}B^2}{4}.
\end{equation}
\end{lemma}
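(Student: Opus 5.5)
The plan is to combine, at each time step, the categorical-face first-order bound of \cref{lem:pme_mirror_basic}(1) with the KL mirror-ascent inequality of \cref{lem:pme_mirror_basic}(2), and then telescope the KL potential across time using the open-system variation $\mathcal V_T^{\rm open}$.

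\textbf{Per-step bound.} Fix $t$ and condition on $\mathcal H_t$. Apply \eqref{eq:restricted_dr_ineq} with $y=x_t^\star$ and $x=x_t$; both lie in $\mathcal F_t$ because $\pi_t$ is categorical. Together with $\tilde f_t(x_t^\star;s_t)=F_t^\star(s_t)$ from \cref{lem:pme_exactness}, this gives
\[
\tfrac12F_t^\star(s_t)-\tilde f_t(x_t;s_t)\le \tfrac12\langle\nabla\tilde f_t(x_t;s_t),x_t^\star-x_t\rangle .
\]
Since $x_t^\star$ and $x_t$ are $\mathcal H_t$-measurable, the unbiasedness identity \eqref{eq:submapl_unbiased_pme_gradient} lets us replace the true gradient by $\mathbb E[\langle\widehat g_t,x_t^\star-x_t\rangle\mid\mathcal H_t]$.

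Next, apply \eqref{eq:kl_mirror_ineq} with $g=\widehat g_t$, $u=x_t^\star$ and $x=x_t$. The hypothesis $x_t>0$ holds because logits stay finite: arrivals are initialized with finite logits and each update adds the bounded increment $\eta\widehat g$. The step must also satisfy $x^+=x_t^+$, meaning the global mirror step on $\mathcal F_t$ agrees with the row-wise closed form \eqref{eq:kl_mirror_closed_form}. This holds because $\mathcal F_t$ and the KL both separate across agents. Finally, the row structure in \eqref{eq:stagewise_optimal_policy} gives $D_{\rm KL}(x_t^\star\Vert x_t)=D_{{\rm KL},t}(\pi_t^\star\Vert\pi_t)$, and likewise $D_{\rm KL}(x_t^\star\Vert x_t^+)=D_{{\rm KL},t}(\pi_t^\star\Vert\pi_t^+)$, since rows $o\ne o_{i,t}$ coincide and contribute zero. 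With the gradient bound \eqref{eq:submapl_uniform_gradient_bound}, this yields
\[
\mathbb E\!\left[\tfrac12F_t^\star-\tilde f_t(x_t)\,\middle|\,\mathcal H_t\right]\le \tfrac{1}{2\eta}\,\mathbb E\!\left[D_{{\rm KL},t}(\pi_t^\star\Vert\pi_t)-D_{{\rm KL},t}(\pi_t^\star\Vert\pi_t^+)\,\middle|\,\mathcal H_t\right]+\tfrac{\eta N_{\max}B^2}{4}.
\]

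\textbf{Telescoping.} Take total expectation under $p^{\widehat\pi}$ and sum over $t\in[T]$. Write $\Phi_t=D_{{\rm KL},t}(\pi_t^\star\Vert\pi_t)$ and $\Psi_t=D_{{\rm KL},t}(\pi_t^\star\Vert\pi_t^+)$. Then
\[
\sum_t(\Phi_t-\Psi_t)=\Phi_1-\Psi_T+\sum_{t=1}^{T-1}(\Phi_{t+1}-\Psi_t)\le \Phi_1+\sum_{t=1}^{T-1}(\Phi_{t+1}-\Psi_t)_+ ,
\]
using $\Psi_T\ge0$ and $z\le z_+$. The expectation of the last sum is exactly $\mathcal V_T^{\rm open}$. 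The term $\mathbb E[\Phi_1]$ depends only on $s_1\sim\mu_1$ (and the fixed initialization), giving the first term of \eqref{eq:open_kl_regret_bound}. Summing the constant over $T$ steps gives $\eta TN_{\max}B^2/4$.

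\textbf{Main obstacle.} The delicate step is the measurability and conditioning bookkeeping. The comparator $x_t^\star$ must be fixed given $\mathcal H_t$, which it is since it depends only on $s_t$ and the domain, so the unbiasedness substitution is legitimate. Separately, the KL potentials at times $t$ and $t+1$ live on different domains. That mismatch is absorbed entirely into the definition of $\mathcal V_T^{\rm open}$, so no migration-specific estimate is needed. I would also verify the agentwise separation claim carefully, since the equality $x^+=x_t^+$ depends on it.
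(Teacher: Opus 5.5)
Your proposal is correct and follows essentially the same route as the paper's proof. Both use the categorical-face bound of \cref{lem:pme_mirror_basic} with $y=x_t^\star$, justify the unbiasedness substitution by $\mathcal H_t$-measurability of $x_t$ and $x_t^\star$, and apply the KL mirror-ascent inequality with the row-wise identities $D_{\rm KL}(x_t^\star\Vert x_t)=D_{{\rm KL},t}(\pi_t^\star\Vert\pi_t)$ and $D_{\rm KL}(x_t^\star\Vert x_t^+)=D_{{\rm KL},t}(\pi_t^\star\Vert\pi_t^+)$. Both then use the uniform $N_{\max}B^2$ bound and telescope, dropping the terminal term $\Psi_T\ge 0$ and absorbing the positive parts into $\mathcal V_T^{\rm open}$; your one-line telescoping identity is a compressed form of the paper's decomposition $\Phi_t-\Phi_{t+1}+[\Phi_{t+1}-\Gamma_t]_+$, and your explicit checks of $x_t>0$ and agentwise separability supply details the paper states only briefly.
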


\begin{proof}
For each time step $t$, the categorical-face first-order bound in
\cref{lem:pme_mirror_basic} gives
\begin{equation}
\label{eq:open_restricted_dr_step}
\frac{1}{2}F_t^\star(s_t)
-
\tilde f_t(x_t;s_t)
\le
\frac{1}{2}
\left\langle
\nabla\tilde f_t(x_t;s_t),
x_t^\star-x_t
\right\rangle.
\end{equation}
Since $x_t$ and $x_t^\star$ are measurable with respect to the
pre-sampling history $\mathcal H_t$, taking expectations and using \eqref{eq:submapl_unbiased_pme_gradient} yields
\begin{equation}
\label{eq:open_expected_gap}
\begin{aligned}
&
\mathbb E_{\tau\sim p^{\widehat{\pi}}}
\left[
\frac{1}{2}F_t^\star(s_t)
-
\tilde f_t(x_t;s_t)
\right]
\\
&\quad\le
\frac{1}{2}
\mathbb E_{\tau\sim p^{\widehat{\pi}}}
\left[
\left\langle
\widehat g_t,
x_t^\star-x_t
\right\rangle
\right].
\end{aligned}
\end{equation}

Since $\mathcal F_t=\prod_{i\in\mathcal N_t}\mathcal F_{i}$,
the local KL-mirror updates in \eqref{eq:kl_mirror_optimization} jointly realize the product-space update in  \cref{lem:pme_mirror_basic}. Apply the KL mirror-ascent
inequality with $x=x_t$, $u=x_t^\star$, and
$g=\widehat g_t$, where the post-update vector $x_t^+$ has coordinates
\[
x_{(i,a),t}^+
=
\pi^{i,+}(a\mid o_{i,t}),
\
(i,a)\in\Omega_t.
\]
By \eqref{eq:tabular_policy_kl} and \eqref{eq:stagewise_optimal_policy}, together with the fact that
$\pi_t^+$ and $\pi_t$ coincide at all noncurrent observations,
\begin{equation}
\label{eq:policy_marginal_kl_equivalence}
\begin{aligned}
D_{\rm KL}(x_t^\star\Vert x_t)
&=
D_{{\rm KL},t}
\left(
\pi_t^\star\Vert\pi_t
\right),
\\
D_{\rm KL}(x_t^\star\Vert x_t^+)
&=
D_{{\rm KL},t}
\left(
\pi_t^\star\Vert\pi_t^+
\right).
\end{aligned}
\end{equation}
Therefore,
\begin{equation}
\label{eq:open_mirror_one_step}
\left\langle
\widehat g_t,
x_t^\star-x_t
\right\rangle
\le
\frac{
D_{{\rm KL},t}
\left(
\pi_t^\star\Vert\pi_t
\right)
-
D_{{\rm KL},t}
\left(
\pi_t^\star\Vert\pi_t^+
\right)
}{\eta}
+
\frac{\eta}{2}
\lVert\widehat g_t\rVert_{\infty,2}^2.
\end{equation}
By the uniform gradient bound in \eqref{eq:submapl_uniform_gradient_bound},
\begin{equation}
\label{eq:open_gradient_norm_bound}
\lVert\widehat g_t\rVert_{\infty,2}^2
\le
N_{\max}B^2.
\end{equation}
Combining \eqref{eq:open_expected_gap},
\eqref{eq:open_mirror_one_step}, and
\eqref{eq:open_gradient_norm_bound} gives
\begin{align}
&
\mathbb E_{\tau\sim p^{\widehat{\pi}}}
\left[
\frac{1}{2}F_t^\star(s_t)
-
\tilde f_t(x_t;s_t)
\right]
\notag\\
&\quad\le
\frac{
\mathbb E_{\tau\sim p^{\widehat{\pi}}}
\left[
D_{{\rm KL},t}
\left(
\pi_t^\star\Vert\pi_t
\right)
-
D_{{\rm KL},t}
\left(
\pi_t^\star\Vert\pi_t^+
\right)
\right]
}{2\eta}
+
\frac{\eta N_{\max}B^2}{4}.
\label{eq:open_one_step_bound}
\end{align}

For compactness, define $\Phi_t=D_{{\rm KL},t}\left(\pi_t^\star\Vert \pi_t \right)$,
and $\Gamma_t = D_{{\rm KL},t} \left( \pi_t^\star\Vert\pi_t^+
\right)$.
For $t=1,\ldots,T-1$, we have
\begin{equation}
\label{eq:open_tracking_decomposition}
\begin{aligned}
\Phi_t-\Gamma_t
&=
\Phi_t-\Phi_{t+1}
+
\Phi_{t+1}-\Gamma_t
\\
&\le
\Phi_t-\Phi_{t+1}
+
\left[
\Phi_{t+1}-\Gamma_t
\right]_+.
\end{aligned}
\end{equation}
By the definition of the two potentials,
\begin{equation}
\label{eq:open_potential_increment}
\left[
\Phi_{t+1}-\Gamma_t
\right]_+
=
\left[
D_{{\rm KL},t+1}
\left(
\pi_{t+1}^\star\Vert\pi_{t+1}
\right)
-
D_{{\rm KL},t}
\left(
\pi_t^\star\Vert\pi_t^+
\right)
\right]_+.
\end{equation}
Summing \eqref{eq:open_tracking_decomposition} over
$t=1,\ldots,T-1$ gives
\begin{equation}
\label{eq:open_tracking_partial_sum}
\sum_{t=1}^{T-1}
(\Phi_t-\Gamma_t)
\le
\Phi_1-\Phi_T
+
\sum_{t=1}^{T-1}
\left[
\Phi_{t+1}-\Gamma_t
\right]_+.
\end{equation}
Adding the terminal term $\Phi_T-\Gamma_T$ to both sides and using $\Gamma_T\ge0$ yields
\begin{equation}
\label{eq:open_tracking_pathwise_sum}
\sum_{t=1}^{T}
(\Phi_t-\Gamma_t)
\le
\Phi_1
+
\sum_{t=1}^{T-1}
\left[
\Phi_{t+1}-\Gamma_t
\right]_+.
\end{equation}
Taking expectations with respect to $\tau\sim p^{\widehat{\pi}}$ and using the definition of $\mathcal V_T^{\rm open}$ in \eqref{eq:open_kl_variation}, we obtain
\begin{equation}
\label{eq:open_tracking_sum}
\sum_{t=1}^{T}
\mathbb E_{\tau\sim p^{\widehat{\pi}}}
\left[
\Phi_t-\Gamma_t
\right]
\le
\mathbb E_{\tau\sim p^{\widehat{\pi}}}
\left[
\Phi_1
\right]
+
\mathcal V_T^{\rm open}.
\end{equation}
Summing \eqref{eq:open_one_step_bound} over
$t=1,\ldots,T$ and applying
\eqref{eq:open_tracking_sum} gives
\begin{align}
&
\sum_{t=1}^{T}
\mathbb E_{\tau\sim p^{\widehat{\pi}}}
\left[
\frac{1}{2}F_t^\star(s_t)
-
\tilde f_t(x_t;s_t)
\right]
\notag\\
&\quad\le
\frac{
\mathbb E_{s_1\sim\mu_1}
\left[
D_{{\rm KL},1}
\left(
\pi_1^\star\Vert\pi_1
\right)
\right]
+
\mathcal V_T^{\rm open}
}{2\eta}
+
\frac{\eta T N_{\max}B^2}{4}.
\label{eq:open_regret_sum}
\end{align}
The left-hand side equals $\mathrm{Reg}_T^{1/2}$ by
\cref{def:open_regret}, which proves
\eqref{eq:open_kl_regret_bound}.
\end{proof}

\cref{lem:open_kl_regret} controls the cumulative approximation gap relative to the time-varying stagewise optimal values. However, \cref{prob:main} evaluates the cumulative policy value induced over the full trajectory.  We next connect the stagewise approximation-regret guarantee in \cref{lem:open_kl_regret} to the finite-horizon objective of \cref{prob:main}. By \cref{cor:cumulative_equivalence},
\begin{equation}
\label{eq:submapl_pme_objective}
J(\widehat{\pi})
=
\mathbb E_{\tau\sim p^{\widehat{\pi}}}
\left[
\sum_{t=1}^{T}
\tilde f_t(x_t;s_t)
\right].
\end{equation}
Let $J^\star=\max_{\pi}J(\pi)$ be the optimal finite-horizon value over the considered class of decentralized factorized policy sequences. For readability, denote the bound in \eqref{eq:open_kl_regret_bound} by
\begin{equation}
\label{eq:CT_def}
\mathcal C_T
=
\frac{
\mathbb E_{s_1\sim\mu_1}
\left[
D_{{\rm KL},1}
\left(
\pi_1^\star\Vert\pi_1
\right)
\right]
+
\mathcal V_T^{\rm open}
}{2\eta}
+
\frac{\eta T N_{\max}B^2}{4}.
\end{equation}
Moreover, define the finite-horizon benchmark mismatch
\begin{equation}
\label{eq:horizon_stagewise_gap}
\Delta_T = \left[J^\star - \mathbb E_{\tau\sim p^{\widehat{\pi}}}
\left[\sum_{t=1}^{T} F_t^\star(s_t) \right] \right]_+.
\end{equation}

\begin{theorem}[Finite-Horizon Performance Bound]
\label{thm:finite_horizon_performance}
Let $\widehat{\pi}=\{\pi_t\}_{t=1}^T$ be the policy sequence generated by \cref{alg:submapl}. Under \cref{ass:standing_conditions} and \cref{ass:bounded_marginal_gain}, we have
\begin{equation}
\label{eq:horizon_gap_decomposition}
\frac{1}{2}J^\star-J(\widehat{\pi})
\le
\mathcal C_T+
\frac{1}{2}\Delta_T,
\end{equation}
where $\mathcal C_T$ and $\Delta_T$ are defined in \eqref{eq:CT_def} and \eqref{eq:horizon_stagewise_gap}, respectively.
\end{theorem}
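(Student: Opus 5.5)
The bound should follow by chaining three earlier results: \cref{cor:cumulative_equivalence}, \cref{lem:open_kl_regret}, and the definition of $\Delta_T$. The first step rewrites the algorithm's value as a cumulative PME along its own trajectory. By \eqref{eq:submapl_pme_objective},
\[
J(\widehat{\pi})=\sum_{t=1}^{T}\mathbb E_{\tau\sim p^{\widehat{\pi}}}\bigl[\tilde f_t(x_t;s_t)\bigr],
\]
where linearity of expectation lets us exchange the sum and the expectation. Substituting this into \cref{def:open_regret} gives the identity
\[
\mathrm{Reg}_T^{1/2}=\tfrac12\,\mathbb E_{\tau\sim p^{\widehat{\pi}}}\Bigl[\sum_{t=1}^T F_t^\star(s_t)\Bigr]-J(\widehat{\pi}).
\]
Combining it with \cref{lem:open_kl_regret}, i.e. $\mathrm{Reg}_T^{1/2}\le\mathcal C_T$ by \eqref{eq:CT_def}, yields
\[
\tfrac12\,\mathbb E_{\tau\sim p^{\widehat{\pi}}}\Bigl[\sum_{t=1}^T F_t^\star(s_t)\Bigr]-J(\widehat{\pi})\le \mathcal C_T .
\]

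The second step moves from the stagewise benchmark to $J^\star$ using $\Delta_T$. Write
\[
\tfrac12 J^\star-J(\widehat{\pi})=\tfrac12\Bigl(J^\star-\mathbb E_{\tau\sim p^{\widehat{\pi}}}\Bigl[\sum_{t}F_t^\star(s_t)\Bigr]\Bigr)+\Bigl(\tfrac12\,\mathbb E_{\tau\sim p^{\widehat{\pi}}}\Bigl[\sum_t F_t^\star(s_t)\Bigr]-J(\widehat{\pi})\Bigr).
\]
The first bracket is at most its positive part, which is exactly $\Delta_T$ in \eqref{eq:horizon_stagewise_gap}; the second bracket is at most $\mathcal C_T$ from the first step. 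Adding the two gives \eqref{eq:horizon_gap_decomposition}.

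The only point that needs care is the benchmark mismatch. The expectation in $\Delta_T$ is taken along the trajectory of $\widehat\pi$, while $J^\star$ is achieved by a different trajectory distribution, so there is no inequality relating them in general. That is why $\Delta_T$ carries the positive part and enters as an additive correction rather than being bounded. With $\Delta_T$ defined this way, no comparison between trajectory distributions is required. Finally, the regret lemma is applied pathwise through $\mathcal H_t$, and the bounded marginal gains in \cref{ass:bounded_marginal_gain} make every expectation finite, so exchanging sums and expectations is justified.
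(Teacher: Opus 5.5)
Your proof is correct and is essentially the paper's: both combine \cref{lem:open_kl_regret} with \eqref{eq:submapl_pme_objective} to obtain $J(\widehat{\pi})\ge\frac12\mathbb E_{\tau\sim p^{\widehat{\pi}}}[\sum_{t}F_t^\star(s_t)]-\mathcal C_T$, and then use $\mathbb E_{\tau\sim p^{\widehat{\pi}}}[\sum_{t}F_t^\star(s_t)]\ge J^\star-\Delta_T$, which follows from the definition of $\Delta_T$. One wording fix: it is the parenthesized difference that is at most $\Delta_T$, so the first term of your split, including its factor $\frac12$, is bounded by $\frac12\Delta_T$, exactly as \eqref{eq:horizon_gap_decomposition} requires.
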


\begin{proof}
By \cref{def:open_regret} and \cref{lem:open_kl_regret},
\begin{equation}
\label{eq:own_stagewise_benchmark_lower_bound}
J(\widehat{\pi})
\ge
\frac{1}{2}
\mathbb E_{\tau\sim p^{\widehat{\pi}}}
\left[
\sum_{t=1}^{T}
F_t^\star(s_t)
\right]
-
\mathcal C_T.
\end{equation}
By the definition of $\Delta_T$ in \eqref{eq:horizon_stagewise_gap},
\begin{equation}
\label{eq:horizon_stagewise_gap_lower_bound}
\mathbb E_{\tau\sim p^{\widehat{\pi}}}
\left[
\sum_{t=1}^{T}
F_t^\star(s_t)
\right]
\ge
J^\star-\Delta_T.
\end{equation}
Substituting \eqref{eq:horizon_stagewise_gap_lower_bound} into \eqref{eq:own_stagewise_benchmark_lower_bound} and rearranging yields \eqref{eq:horizon_gap_decomposition}.
\end{proof}

The mismatch term $\Delta_T$ captures the discrepancy between the optimal finite-horizon value and the cumulative stagewise optimal values along the trajectory induced by SubMAPL. Since the stagewise KL-mirror analysis does not control this term without additional temporal structure, \cref{thm:finite_horizon_performance} provides a finite-horizon decomposition rather than a uniform approximation guarantee.

\subsection{Suboptimality Gap With Submodular Finite-Horizon Utility}
\label{sec:coverage_connection}

We next identify a structural setting in which the stagewise approximation-regret bound yields a finite-horizon approximation guarantee. In coverage and information-collection problems, the stage utility can often be represented as the marginal gain of a
trajectory-level submodular function with respect to the accumulated history~\cite{prajapat2024submodular}. Under such a time-expanded representation, if the ground set and the feasible families are independent of the executed policy, submodularity connects the cumulative stagewise oracle values along the SubMAPL trajectory with the finite-horizon optimum $J^\star$.

\begin{assumption}[Time-Expanded Submodular Utility~\cite{prajapat2024submodular}]
\label{ass:telescoping_coverage}
There exist a policy-independent time-expanded ground set
$\bar\Omega_{1:T}$, policy-independent feasible families
$\{\bar{\mathcal I}_t\}_{t=1}^{T}$, and a normalized, monotone,
submodular function $\bar F:2^{\bar\Omega_{1:T}}\to\mathbb R_{\ge0}$. For every feasible execution, each $A_t\in\mathcal I_t$ corresponds to a time-expanded action set $\bar A_t\in\bar{\mathcal I}_t$, and the state $s_t$ contains the
accumulated history $\bar A_{1:t-1}=\bigcup_{\tau=1}^{t-1}\bar A_\tau$.
The stage utility satisfies
$F_t(A_t;s_t)=\bar F(\bar A_t\mid\bar A_{1:t-1})=
\bar F(\bar A_{1:t-1}\cup\bar A_t)-\bar F(\bar A_{1:t-1})$.
Moreover, every $\bar A_t'\in\bar{\mathcal I}_t$ corresponds to some feasible $A_t'\in\mathcal I_t$.
\end{assumption}

\cref{ass:telescoping_coverage} is used to convert the stagewise approximation-regret bound into a finite-horizon guarantee. It holds for trajectory-level coverage and information-collection objectives whose cumulative utility is represented by a monotone submodular function of the selected time-expanded items~\cite{krause2008near,golovin2011adaptive,prajapat2024submodular}. 

\begin{example}[Static Sensing Coverage Utility~\cite{sun2019exploiting}]
\label{ex:time_expanded_coverage}
Consider a multi-agent coverage problem over a finite target set
$\mathcal Q$. At time step $t$, each agent-action pair $(i,a)\in\Omega_t$ covers a sensing region $C_t(i,a)\subseteq\mathcal Q$, that does not depend on the current state. Assume that $\Omega_t$ and the sensing regions $\{C_t(i,a)\}_{(i,a)\in\Omega_t}$ are determined independently of the executed policy.

Define the time-expanded ground set
$\bar\Omega_{1:T}
=
\{(i,a,t):t\in[T],\ (i,a)\in\Omega_t\}$.
For each feasible $A_t\in\mathcal I_t$, let
$\bar A_t
=
\{(i,a,t):(i,a)\in A_t\}$,
and let
$\bar{\mathcal I}_t
=
\{\bar A_t:A_t\in\mathcal I_t\}$.
Define
$\bar A_{1:t}=\bigcup_{\tau=1}^{t}\bar A_\tau$ and
$\bar A_{1:0}=\emptyset$.
Given target weights $w_q\ge0$, define
\[
\bar F(S)
=
\sum_{q\in\mathcal Q}
w_q
\mathbf 1
\left\{
q\in
\bigcup_{(i,a,t)\in S}
C_t(i,a)
\right\},
\
S\subseteq\bar\Omega_{1:T}.
\]
Then $\bar F$ is normalized, monotone, and submodular. If $s_t$
records the targets covered before time $t$, the stage utility satisfies
$F_t(A_t;s_t)=\bar F(\bar A_t\mid\bar A_{1:t-1})$,
and therefore
\[
\sum_{t=1}^{T}F_t(A_t;s_t)
=
\bar F(\bar A_{1:T}).
\]
Thus, \cref{ass:telescoping_coverage} holds.
\end{example}

\begin{theorem}[Finite-Horizon Approximation under Time-Expanded Submodularity]
\label{thm:telescoping_global}
Let $\widehat{\pi}$ be the policy sequence generated by
\cref{alg:submapl}. Suppose \cref{ass:standing_conditions}, \cref{ass:bounded_marginal_gain}, and \cref{ass:telescoping_coverage} hold. Then
\begin{equation}
\label{eq:telescoping_one_third}
J(\widehat{\pi}) \ge \frac{1}{3}J^\star - \frac{2}{3}\mathcal C_T.
\end{equation}
\end{theorem}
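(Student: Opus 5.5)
The plan is to combine the stagewise approximation-regret bound of \cref{lem:open_kl_regret} with a standard "greedy-versus-optimum" submodularity argument on the time-expanded ground set of \cref{ass:telescoping_coverage}. Writing $\Sigma^\star=\mathbb E_{\tau\sim p^{\widehat\pi}}\bigl[\sum_{t=1}^T F_t^\star(s_t)\bigr]$, the argument has three steps.

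\textbf{Step 1 (regret bound).} As in the proof of \cref{thm:finite_horizon_performance}, \cref{def:open_regret}, \cref{lem:open_kl_regret} and \eqref{eq:submapl_pme_objective} give
\[
J(\widehat\pi)\ge \tfrac12\Sigma^\star-\mathcal C_T .
\]

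\textbf{Step 2 (replace $\Delta_T$).} Instead of the mismatch term $\Delta_T$, I will prove the structural inequality
\[
J^\star\le J(\widehat\pi)+\Sigma^\star .
\]
Substituting $\Sigma^\star\ge J^\star-J(\widehat\pi)$ into Step 1 gives $\tfrac32 J(\widehat\pi)\ge\tfrac12J^\star-\mathcal C_T$, which is exactly \eqref{eq:telescoping_one_third}.

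\textbf{Step 3 (proof of the structural inequality).} Fix a comparator policy $\pi'$. If the maximum defining $J^\star$ is not attained, take $\pi'$ to be $\varepsilon$-optimal and let $\varepsilon\to0$ at the end. Couple a trajectory of $\pi'$, with time-expanded sets $\bar O_t\in\bar{\mathcal I}_t$, independently with the SubMAPL trajectory, with time-expanded sets $\bar A_t$. This coupling is legitimate because $\bar\Omega_{1:T}$ and $\bar{\mathcal I}_t$ are policy-independent. By telescoping under \cref{ass:telescoping_coverage}, $\sum_t F_t(A_t;s_t)=\bar F(\bar A_{1:T})$, and similarly for $\pi'$. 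Hence $J(\widehat\pi)=\mathbb E[\bar F(\bar A_{1:T})]$ and $J(\pi')=\mathbb E[\bar F(\bar O_{1:T})]$.

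Pathwise, monotonicity and then submodularity of $\bar F$ give
\[
\bar F(\bar O_{1:T})\le\bar F(\bar A_{1:T}\cup\bar O_{1:T})\le \bar F(\bar A_{1:T})+\sum_{t=1}^T\bar F(\bar O_t\mid\bar A_{1:T}) .
\]
The second inequality follows by adding the blocks $\bar O_t$ one at a time and bounding each increment by its marginal with respect to the smaller set $\bar A_{1:T}$. A second use of submodularity, with $\bar A_{1:t-1}\subseteq\bar A_{1:T}$, gives
\[
\bar F(\bar O_t\mid\bar A_{1:T})\le\bar F(\bar O_t\mid\bar A_{1:t-1}).
\]
By the last clause of \cref{ass:telescoping_coverage}, $\bar O_t$ corresponds to some feasible $O_t'\in\mathcal I_t$. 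Its value $\bar F(\bar O_t\mid\bar A_{1:t-1})=F_t(O_t';s_t)$ is therefore a stage utility at SubMAPL's state $s_t$, which contains the history $\bar A_{1:t-1}$. Hence it is at most $F_t^\star(s_t)$. Taking expectations over the product coupling yields $J(\pi')\le J(\widehat\pi)+\Sigma^\star$, and letting $\pi'$ approach optimality gives the structural inequality.

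\textbf{Main obstacle.} The delicate step is the identification $\bar F(\bar O_t\mid\bar A_{1:t-1})=F_t(O'_t;s_t)\le F_t^\star(s_t)$. This requires that the stage utility at state $s_t$ equals $\bar F(\,\cdot\mid\bar A_{1:t-1})$ for \emph{every} feasible set, not only the executed one. It also requires that the comparator's sets live in the same $\bar{\mathcal I}_t$ regardless of the trajectory, which uses that agent arrivals and departures are independent of the policies. I would state this reading of \cref{ass:telescoping_coverage} explicitly, as is illustrated by \cref{ex:time_expanded_coverage}.
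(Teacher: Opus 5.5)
Your proposal is correct and follows the paper's proof essentially step for step. The same monotonicity-plus-submodularity chain on the time-expanded ground set gives $J^\star\le J(\widehat\pi)+\mathbb E_{\tau\sim p^{\widehat\pi}}\bigl[\sum_{t=1}^T F_t^\star(s_t)\bigr]$, which is then combined with $J(\widehat\pi)\ge\frac12\mathbb E_{\tau\sim p^{\widehat\pi}}\bigl[\sum_{t=1}^T F_t^\star(s_t)\bigr]-\mathcal C_T$. The only difference is that the paper compares against a fixed reference sequence $\bar A_t'\in\bar{\mathcal I}_t$ and then uses $J^\star\le\max_{\bar A_t'\in\bar{\mathcal I}_t}\bar F(\bar A_{1:T}')$, which avoids your coupling and your $\varepsilon$-optimality argument; your coupling also need not be independent, since the pathwise inequality holds for every pair of realizations.
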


\begin{proof}
Consider an arbitrary trajectory generated by $\widehat{\pi}$. Let
$\{\bar A_t\}_{t=1}^{T}$ denote the corresponding time-expanded action sets, and let $\{s_t\}_{t=1}^{T}$ denote the state sequence. Let $\{\bar A_t'\}_{t=1}^{T}$ be any fixed feasible reference sequence with $\bar A_t'\in\bar{\mathcal I}_t$ for each $t$, and define $\bar A_{1:t}'=\bigcup_{\tau=1}^{t}\bar A_\tau'$ with $\bar A_{1:0}'=\emptyset$. By \cref{ass:telescoping_coverage}, for each
$\bar A_t'\in\bar{\mathcal I}_t$, there exists $A_t'\in\mathcal I_t$ whose corresponding time-expanded action set is $\bar A_t'$.

By monotonicity of $\bar F$,
\begin{equation}
\label{eq:greedy_step1}
\bar F(\bar A_{1:T}')
\le
\bar F(\bar A_{1:T})
+
\bar F(\bar A_{1:T}'\mid \bar A_{1:T}) .
\end{equation}
Using the chain rule for marginal gains, we have
\begin{align}
\label{eq:greedy_step2}
\bar F(\bar A_{1:T}'\mid \bar A_{1:T})
&=
\sum_{t=1}^{T}
\bar F
\left(
\bar A_t'
\mid
\bar A_{1:T}\cup \bar A_{1:t-1}'
\right)
\nonumber\\
&\le
\sum_{t=1}^{T}
\bar F
\left(
\bar A_t'
\mid
\bar A_{1:t-1}
\right),
\end{align}
where the inequality follows from
$\bar A_{1:t-1}\subseteq \bar A_{1:T}\cup\bar A_{1:t-1}'$ and the
diminishing-returns property of the submodular function $\bar F$.

By \cref{ass:telescoping_coverage}, the marginal gain of the reference time-expanded action set $\bar A_t'$ along the SubMAPL history is exactly the corresponding stage utility:
\begin{equation}
\label{eq:comparator_action_set_bound}
\bar F(\bar A_t'\mid \bar A_{1:t-1})
=
F_t(A_t';s_t)
\le
F_t^\star(s_t).
\end{equation}
Combining \eqref{eq:greedy_step1}--\eqref{eq:comparator_action_set_bound}
gives, for the trajectory under consideration,
\begin{equation}
\label{eq:deterministic_comparator_bound}
\bar F(\bar A_{1:T}')
\le
\bar F(\bar A_{1:T})
+
\sum_{t=1}^{T}
F_t^\star(s_t).
\end{equation}

We next take expectation with respect to the trajectory distribution induced by SubMAPL. By \cref{ass:telescoping_coverage}, the
cumulative utility along the SubMAPL trajectory telescopes:
\begin{align*}
\sum_{t=1}^{T}
F_t(A_t;s_t)
&=
\sum_{t=1}^{T}
\bar F(\bar A_t\mid \bar A_{1:t-1})
\\
&=
\sum_{t=1}^{T}
\left[
\bar F(\bar A_{1:t})
-
\bar F(\bar A_{1:t-1})
\right]
\\
&=
\bar F(\bar A_{1:T}),
\end{align*}
where $\bar F(\emptyset)=0$ because $\bar F$ is normalized. Therefore,
\begin{equation*}
J(\widehat{\pi})
=
\mathbb E_{\tau\sim p^{\widehat{\pi}}}
\left[
\bar F(\bar A_{1:T})
\right].
\end{equation*}
Taking expectation in \eqref{eq:deterministic_comparator_bound} over
$\tau\sim p^{\widehat{\pi}}$ gives
\begin{equation}
\label{eq:expected_comparator_bound}
\bar F(\bar A_{1:T}')
\le
J(\widehat{\pi})
+
\mathbb E_{\tau\sim p^{\widehat{\pi}}}
\left[
\sum_{t=1}^{T}
F_t^\star(s_t)
\right].
\end{equation}

Since \eqref{eq:expected_comparator_bound} holds for every feasible
reference sequence $\{\bar A_t'\}_{t=1}^{T}$ with
$\bar A_t'\in\bar{\mathcal I}_t$, it also holds after maximizing the
left-hand side over all such reference sequences:
\begin{equation}
\label{eq:best_reference_bound}
\max_{\bar A_t'\in\bar{\mathcal I}_t,\ t\in[T]}
\bar F(\bar A_{1:T}')
\le
J(\widehat{\pi})
+
\mathbb E_{\tau\sim p^{\widehat{\pi}}}
\left[
\sum_{t=1}^{T}
F_t^\star(s_t)
\right].
\end{equation}
Since the time-expanded feasible families are policy-independent, every realization generated by an admissible policy $\pi$ satisfies
$\bar A_t\in\bar{\mathcal I}_t$ for all $t\in[T]$. Therefore,
for any admissible policy $\pi$,
\begin{equation*}
J(\pi)
=
\mathbb E_{\tau\sim p^\pi}
\left[
\bar F(\bar A_{1:T})
\right]
\le
\max_{\bar A_t'\in\bar{\mathcal I}_t,\ t\in[T]}
\bar F(\bar A_{1:T}').
\end{equation*}
Taking the maximum over $\pi$ gives
\begin{equation*}
J^\star
\le
\max_{\bar A_t'\in\bar{\mathcal I}_t,\ t\in[T]}
\bar F(\bar A_{1:T}') .
\end{equation*}
Combining this inequality with \eqref{eq:best_reference_bound} yields
\begin{equation}
\label{eq:Jstar_leq_J_plus_oracle}
J^\star
\le
J(\widehat{\pi})
+
\mathbb E_{\tau\sim p^{\widehat{\pi}}}
\left[
\sum_{t=1}^{T}
F_t^\star(s_t)
\right].
\end{equation}

It remains to use the stagewise approximation-regret bound. From
\eqref{eq:own_stagewise_benchmark_lower_bound}, we have
\begin{equation}
\label{eq:oracle_upper_submapl}
\mathbb E_{\tau\sim p^{\widehat{\pi}}}
\left[
\sum_{t=1}^{T}
F_t^\star(s_t)
\right]
\le
2J(\widehat{\pi})
+
2\mathcal C_T .
\end{equation}
Substituting \eqref{eq:oracle_upper_submapl} into
\eqref{eq:Jstar_leq_J_plus_oracle} gives
\begin{equation*}
J^\star
\le
3J(\widehat{\pi})
+
2\mathcal C_T .
\end{equation*}
Rearranging proves \eqref{eq:telescoping_one_third}.
\end{proof}

\cref{lem:open_kl_regret} establishes an open-system dynamic
$1/2$-approximation-regret bound for SubMAPL relative to the time-varying stagewise PME maximizers. These results characterize the stagewise tracking performance of SubMAPL along its induced state trajectory.
However, the objective in \cref{prob:main} is the optimal finite-horizon
policy value, which need not coincide with the cumulative stagewise optimal values. Thus, we quantify this discrepancy through the mismatch term $\Delta_T$ in \eqref{eq:horizon_gap_decomposition}. Under
\cref{ass:telescoping_coverage}, the policy-independent time-expanded
submodular representation connects the stagewise benchmark to the
finite-horizon optimum, yielding the approximation guarantee in
\cref{thm:telescoping_global}.


\section{Numerical Experiments}
\label{sec:experiments}

\begin{figure*}[t]
    \centering
    \includegraphics[width=\linewidth]{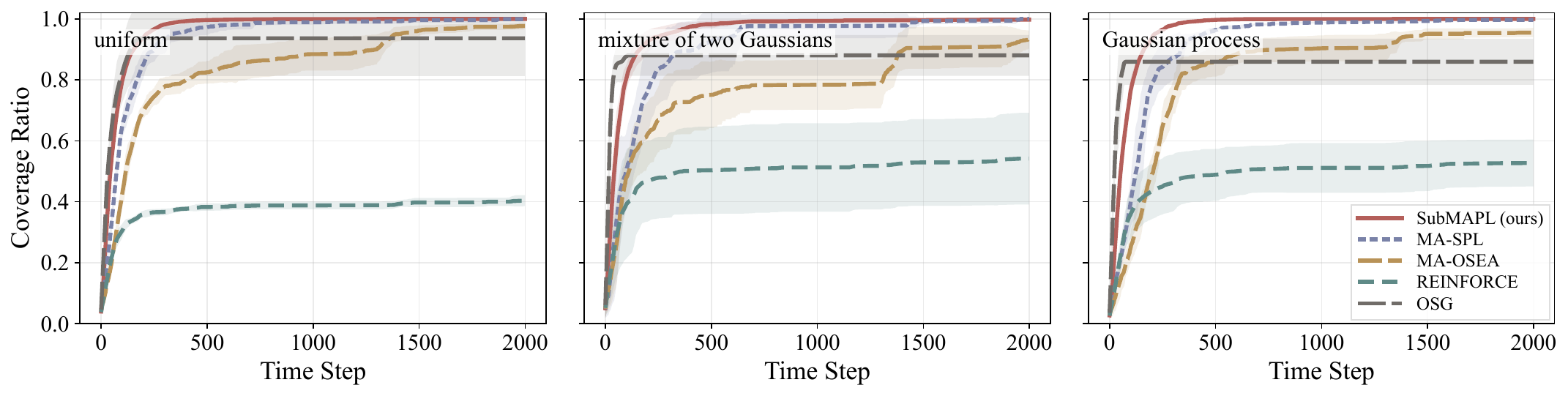}
    \captionsetup{font={small}}
    \caption{Coverage ratio under the controlled $5\to2\to5$ active-agent profile. From left to right: uniform, mixture of two Gaussians, and Gaussian process fields. Curves show the means, and shaded regions denote $95\%$ confidence intervals over shared evaluation scenarios.}
    \label{fig:main_comparison}
\end{figure*}

We evaluate SubMAPL on a multi-agent coverage task \cite{prajapat2024submodular}. SubMAPL is trained with a fixed (time-invariant) agent set and evaluated on open systems with a time-varying agent set.

\subsection{Experimental Setup}
\label{sec:exp_setup}

The workspace $\mathcal{V}$ is a $30\times30$ grid with $N_{\max}=5$ agents. To model spatially heterogeneous information, we weigh grid cells according to a probability distribution. We consider a uniform field, a mixture of two isotropic Gaussian components, and a positive log-Gaussian process field generated using a separable radial basis function (RBF) kernel. Cell weights remain constant throughout each experiment. All methods are evaluated with the same cell weights and initial agent positions.

The global state $s_t$ contains the active-agent set
$\mathcal N_t$, the current agent positions, and the accumulated coverage set $\mathcal C_{t-1}^{\rm hist} = \bigcup_{k=1}^{t-1}\mathcal C_k(A_k;s_k)$,
which records all cells sensed before the action at time $t$. The complete coverage history is maintained by the environment for state transitions and utility evaluation but is not directly available to individual agents during decentralized execution. Each agent observes only a local encoding of the coverage status described below.

Each agent receives a discrete local observation $o_{i,t}$. It contains: 1) the index of the agent's current $6\times6$ region in a partition of the workspace into $25$ regions; 2) the relative positions of at most two nearest active neighbors within the communication radius $r_{\rm com}=2$; 3) the states of the agent's current cell and its four axially adjacent cells, where each cell is classified as already covered or outside the workspace, uncovered with low information density, or uncovered with high information density; and 4) the agent's previous action. Thus, an agent observes only nearby coverage and neighbor information, rather than the complete coverage history or the positions of all agents.

At each step $t$, active agent $i\in\mathcal N_t$ selects a motion action $a_{i,t}\in\mathcal A_{i}=\{\textnormal{idle},\textnormal{left},\textnormal{right},\textnormal{up},\textnormal{down}\}$, which determines its next position. Let $D_{i}(a_{i,t};s_t)$ denote the sensing region of agent $i$ after applying action $a_{i,t}$ at state $s_t$, which includes all cells within Chebyshev distance $r_{\rm cov}=1$ from the agent's new position. Hence, $D_{i}(a_{i,t};s_t)$ forms a $3\times3$ neighborhood and determines the information collected by agent $i$.

The cells sensed as a result of executing $A_t$ at state $s_t$ are 
\begin{equation}
    \mathcal C_t(A_t;s_t) = \bigcup_{i\in\mathcal N_t}
D_{i}(a_{i,t};s_t). 
\end{equation}
Let
$\mathcal C_{t-1}^{\rm hist} = \bigcup_{k=1}^{t-1}\mathcal C_k(A_k;s_k)$ denote all cells covered before the action at time $t$. We define the stage utility as the weighted information collected from  uncovered cells:
\begin{equation}
    F_t(A_t;s_t)=\sum_{v\in\mathcal C_t(A_t;s_t)\setminus\mathcal C_{t-1}^{\rm hist}}\rho(v),
\end{equation}
where $\rho(v)\geq 0$ is the weight of cell $v\in\mathcal{V}$. Utility $F_t(\cdot;s_t)$ is a normalized, monotone, and submodular set function over the active agent-action ground set $\Omega_t$, satisfying \cref{ass:standing_conditions} \cite{krause2014SubmodularFM,sun2019exploiting}.
We measure performance as the weighted coverage ratio
\begin{equation}
    \mathrm{CovRatio}_t =  \frac{  \sum_{v\in\mathcal C_t^{\rm hist}}\rho(v) }{ \sum_{v\in\mathcal V}\rho(v)}.
\end{equation}

We compare our proposed algorithm SubMAPL (\cref{alg:submapl}) with four benchmarks.
\begin{description}[leftmargin=*]
    \item[\emph{MA-SPL}~\cite{zhang2025effective}:] a policy-based continuous-extension method that employs a submodular surrogate gradient, a minimum-marginal-gain correction, neighbor consensus, and Euclidean projection.
    \item[\emph{MA-OSEA}~\cite{zhang2025near}:] a multilinear-extension method based on curvature-aware surrogate gradients, neighbor consensus, uniform mixing, and a projection-free KL update.
    \item[\emph{REINFORCE}~\cite{williams1992simple}:] a tabular return-based policy-gradient method trained using the global newly covered information as the stage reward.
    \item[\emph{OSG}~\cite{xu2023online}:] a centralized online greedy method that sequentially selects one action for each agent according to their current marginal coverage gains following a fixed order.
\end{description}

SubMAPL and REINFORCE are trained in a closed-system setting for
$3000$ episodes with horizon $T=100$ steps. Each evaluation trajectory has length $T=2000$.
MA-SPL and MA-OSEA are initialized with uniform policies and perform their native online updates throughout the evaluation horizon. At each time step, OSG processes the active agents in a fixed order, and each active agent greedily maximizes $F_t$ over the active agent set at each step $t$.
A complete communication graph is used for MA-SPL and MA-OSEA. SubMAPL does not perform inter-agent
communication or consensus during decentralized execution. Each agent instead locally observes the relative positions of at most the two nearest active agents within communication radius of $r_{\rm com}=2$.

For the methods trained in the closed-system setting, each agent retains its finite-logit policy table while inactive and restores it upon reactivation. 
We use five training seeds and five evaluation scenarios. For SubMAPL and REINFORCE, results are first averaged over training seeds within each evaluation scenario.
The curves report averages across the five evaluation scenarios, and the shaded regions delimit $95\%$ confidence intervals. 

\subsection{Results}
\label{sec:exp_results}

\begin{table}[t]
    \centering
    \caption{Normalized areas under the coverage curves in the
    open-system evaluation. Higher values indicate faster overall
    information acquisition.}
    \label{tab:open_normalized_auc}
    \setlength{\tabcolsep}{3.5pt}
    \renewcommand{\arraystretch}{1.10}
    \begin{tabular}{@{}lcccc@{}}
        \toprule
        Method
        & Uniform
        & \shortstack{Mixture of two\\Gaussians}
        & \shortstack{Gaussian\\process}
        & Average \\
        \midrule
        SubMAPL
        & \textbf{0.966}
        & \textbf{0.960}
        & \textbf{0.963}
        & \textbf{0.963} \\
        MA-SPL
        & 0.939
        & 0.914
        & 0.918
        & 0.924 \\
        MA-OSEA
        & 0.843
        & 0.775
        & 0.831
        & 0.816 \\
        REINFORCE
        & 0.377
        & 0.499
        & 0.487
        & 0.454 \\
        OSG
        & 0.917
        & 0.873
        & 0.850
        & 0.880 \\
        \bottomrule
    \end{tabular}
\end{table}

We first evaluate performance under mild changes in the active agent set $\mathcal{N}_t$, and then conduct a stress test with strongly time-varying agent participation.

\subsubsection{Main Comparison}

\begin{figure}[t]
    \centering
    \includegraphics[width=0.8\columnwidth]{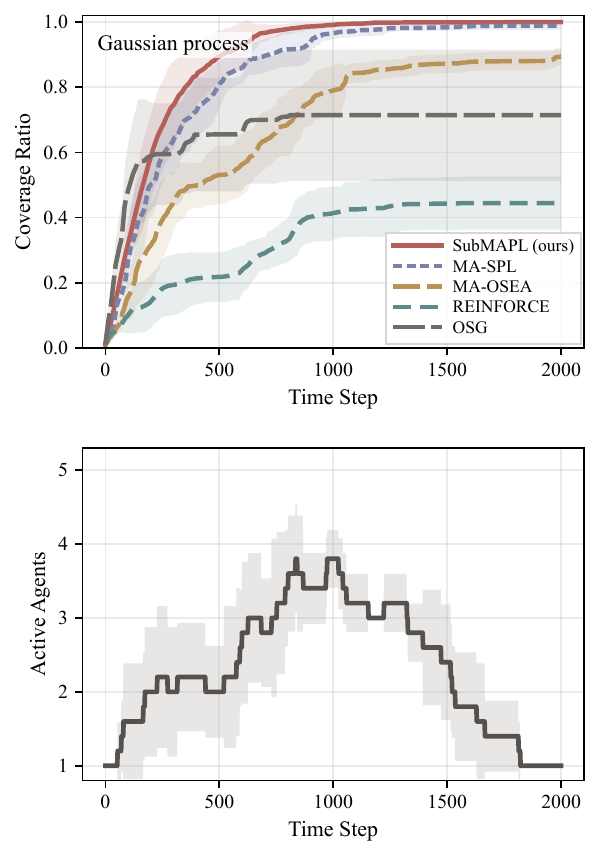}
    \captionsetup{font={small}}
    \caption{Performance under random agent participation on the
    Gaussian-process field. Top: coverage ratios of the five methods. Bottom: the active-agent profile shared by all methods, reported as the mean with $95\%$ confidence intervals over five shared evaluation scenarios.}
    \label{fig:open_random_gp}
\end{figure}

In the main comparison, all five agents are active for $t=0,\ldots,699$, three agents are temporarily unavailable for $t=700,\ldots,1299$, and all five are active again from $t=1300$. 
\cref{fig:main_comparison} compares the five methods under the same open-system profile. SubMAPL achieves the fastest overall information acquisition in all three landscapes. \cref{tab:open_normalized_auc} reports the normalized areas under the coverage curves for all five methods. SubMAPL achieves the largest area in every information-density landscape, with an average
normalized area of $0.963$, compared with $0.924$ for the strongest competing method, MA-SPL. This advantage indicates that SubMAPL acquires information more rapidly throughout the evaluation horizon.

The online optimization baselines exhibit different transient behavior. OSG initially acquires information rapidly by selecting actions from current marginal gains, but its sequential myopic decisions eventually plateau below complete coverage. MA-OSEA improves after the active population is restored at $t=1300$, especially in the nonuniform fields, but remains slower than MA-SPL and SubMAPL. REINFORCE is substantially slower and shows the clearest loss of progress during the interval with only two active agents. The results support the benefit of combining this categorical relaxation with full local-action marginal feedback and KL-mirror policy updates in the tested open-system coverage task.

\subsubsection{Open-system Robustness}

\begin{table}[t]
    \centering
    \caption{Performance under the random agent participation on the Gaussian-process information field.}
    \label{tab:strong_random_robustness}
    \setlength{\tabcolsep}{3.2pt}
    \renewcommand{\arraystretch}{1.10}
    \begin{tabular}{@{}lcccc@{}}
        \toprule
        Method
        & \shortstack{Normalized\\area}
        & \shortstack{Final\\coverage}
        & \shortstack{$95\%$ coverage\\successes}
        & \shortstack{Mean\\$T_{0.95}$} \\
        \midrule
        SubMAPL
        & \textbf{0.887}
        & \textbf{0.999}
        & \textbf{5/5}
        & \textbf{580} \\
        MA-SPL
        & 0.844
        & 0.989
        & 5/5
        & 904 \\
        MA-OSEA
        & 0.683
        & 0.894
        & 0/5
        & -- \\
        REINFORCE
        & 0.338
        & 0.444
        & 0/5
        & -- \\
        OSG
        & 0.665
        & 0.714
        & 0/5
        & -- \\
        \bottomrule
    \end{tabular}

    \vspace{2pt}
    \parbox{\columnwidth}{\footnotesize
    The normalized area is the area under the coverage-ratio curve
    normalized by the evaluation horizon. $T_{0.95}$ denotes the first time step at which the coverage ratio reaches $0.95$ and is averaged only over successful scenarios. The mean normalized-area advantage of SubMAPL over MA-SPL is $0.043$, with a $95\%$ confidence interval of $[0.009,0.078]$.}
\end{table}

We next evaluate the transferability of the closed-system trained policies under stronger stochastic variation in agent participation. The experiment uses the Gaussian process field.
One agent remains active throughout the evaluation horizon, whereas each of the other four agents is active over a randomly generated contiguous time interval. All methods are evaluated under the same active-agent schedule within each evaluation scenario.

\cref{fig:open_random_gp} shows that SubMAPL achieves the fastest cumulative information acquisition despite substantial changes in both the size and composition of the active-agent set. 
\cref{tab:strong_random_robustness} shows that SubMAPL achieves the largest normalized area and the highest final coverage.  Both SubMAPL and MA-SPL reach $0.95$ coverage in all five scenarios, but SubMAPL reaches this level considerably earlier, with a mean hitting time of $580$ steps compared with $904$ steps for MA-SPL. 

SubMAPL reuses observation-conditioned local policy tables learned before deployment and restores the corresponding policy whenever an agent becomes active. Therefore, it avoids restarting the decision process after each change in participation. MA-SPL continues to adapt online, but its slower transient response reduces the information accumulated early in the finite horizon. These results indicate that the learned SubMAPL policies remain effective under substantial random changes in the size and composition of the active-agent set, without deployment-time policy updates.


\section{Conclusion}
\label{sec:conclusion}

This paper studied decentralized policy learning for open multi-agent task allocation with monotone submodular team utilities. We introduced the partition multilinear extension (PME) as a policy-based continuous representation of the expected stage utility induced by factorized categorical policies and developed SubMAPL, a KL-mirror policy-learning method using the stochastic local gradient as feedback over each agent's feasible action set. We showed that this feedback provides unbiased PME coordinate-gradient information and that the KL-mirror update is exactly equivalent to an additive logit update under tabular softmax parameterization. We established open-system approximation-regret guarantees and finite-horizon performance bounds. Numerical experiments on information coverage showed that SubMAPL policies remain effective under random changes in agent participation and outperform online submodular-coordination and policy-gradient baselines. Future work will consider richer communication constraints, neural policy approximation, and long-horizon value-aware objectives.


\bibliographystyle{IEEEtran}
\bibliography{refs}

\end{document}